\def\@seccntformat#1{\@ifundefined{#1@cntformat}%
   {\csname the#1\endcsname\quad}  % default
   {\csname #1@cntformat\endcsname}% enable individual control
}
\let\oldappendix\appendix %% save current definition of \appendix
\renewcommand\appendix{%
    \oldappendix
    \newcommand{\section@cntformat}{\appendixname~\thesection:\,\,}
}
\DeclareMathOperator*{\argmax}{arg\,max}
\newcommand{\Lyx}{L\kern-.1667em\lower.25em\hbox{y}\kern-.125emX\spacefactor1000}
\newcommand{\green}[1]{\textcolor{teal}{#1}}
\newtheorem{definition}{Definition}[section]
\newtheorem{theorem}[definition]{Theorem}
\newtheorem{proposition}[definition]{Proposition}
\newtheorem{lemma}[definition]{Lemma}
\newtheorem{corollary}[definition]{Corollary}
\newtheorem{remark}[definition]{Remark}%
\def\P{{\Bbb P}}
\def\F{{\cal F}}
\def\R{{\Bbb R}}
\def\E{{\Bbb E}}
\newcommand{\Var}{{\Bbb{V}\rm{ar}\,}}
\newcommand{\Cov}{{\Bbb{C}\rm{ov}}}
\newcommand{\tr}{{\rm tr}\;}
\let\DOTSI\relax
\def\RIfM@{\relax\ifmmode}
\def\FN@{\futurelet\next}
\def\iint{\DOTSI\intno@\tw@\FN@\ints@}
\def\iiint{\DOTSI\intno@\thr@@\FN@\ints@}
\def\iiiint{\DOTSI\intno@4 \FN@\ints@}
\def\idotsint{\DOTSI\intno@\z@\FN@\ints@}
\def\ints@{\findlimits@\ints@@}
\newif\iflimtoken@
\newif\iflimits@
\def\findlimits@{\limtoken@true\ifx\next\limits\limits@true
 \else\ifx\next\nolimits\limits@false\else
 \limtoken@false\ifx\ilimits@\nolimits\limits@false\else
 \ifinner\limits@false\else\limits@true\fi\fi\fi\fi}
\def\multint@{\int\ifnum\intno@=\z@\intdots@                                %1
 \else\intkern@\fi                                                          %2
 \ifnum\intno@>\tw@\int\intkern@\fi                                         %3
 \ifnum\intno@>\thr@@\int\intkern@\fi                                       %4
 \int}                                                                      %5
\def\multintlimits@{\intop\ifnum\intno@=\z@\intdots@\else\intkern@\fi
 \ifnum\intno@>\tw@\intop\intkern@\fi
 \ifnum\intno@>\thr@@\intop\intkern@\fi\intop}
\def\intic@{\mathchoice{\hskip.5em}{\hskip.4em}{\hskip.4em}{\hskip.4em}}
\def\negintic@{\mathchoice
 {\hskip-.5em}{\hskip-.4em}{\hskip-.4em}{\hskip-.4em}}
\def\ints@@{\iflimtoken@                                                    %1
 \def\ints@@@{\iflimits@\negintic@\mathop{\intic@\multintlimits@}\limits    %2
  \else\multint@\nolimits\fi                                                %3
  \eat@}                                                                    %4
 \else                                                                      %5
 \def\ints@@@{\iflimits@\negintic@
  \mathop{\intic@\multintlimits@}\limits\else
  \multint@\nolimits\fi}\fi\ints@@@}
\def\intkern@{\mathchoice{\!\!\!}{\!\!}{\!\!}{\!\!}}
\def\plaincdots@{\mathinner{\cdotp\cdotp\cdotp}}
\def\intdots@{\mathchoice{\plaincdots@}
 {{\cdotp}\mkern1.5mu{\cdotp}\mkern1.5mu{\cdotp}}
 {{\cdotp}\mkern1mu{\cdotp}\mkern1mu{\cdotp}}
 {{\cdotp}\mkern1mu{\cdotp}\mkern1mu{\cdotp}}}
\newif\iffirstchoice@
\def\textfonti{\the\textfont\@ne}
\def\textfontii{\the\textfont\tw@}
\def\text{\RIfM@\expandafter\text@\else\expandafter\text@@\fi}
\def\text@@#1{\leavevmode\hbox{#1}}
\def\text@#1{\mathchoice
 {\hbox{\everymath{\displaystyle}\def\textfonti{\the\textfont\@ne}%
  \def\textfontii{\the\textfont\tw@}\textdef@@ T#1}}
 {\hbox{\firstchoice@false
  \everymath{\textstyle}\def\textfonti{\the\textfont\@ne}%
  \def\textfontii{\the\textfont\tw@}\textdef@@ T#1}}
 {\hbox{\firstchoice@false
  \everymath{\scriptstyle}\def\textfonti{\the\scriptfont\@ne}%
  \def\textfontii{\the\scriptfont\tw@}\textdef@@ S\rm#1}}
 {\hbox{\firstchoice@false
  \everymath{\scriptscriptstyle}\def\textfonti
  {\the\scriptscriptfont\@ne}%
  \def\textfontii{\the\scriptscriptfont\tw@}\textdef@@ s\rm#1}}}
\def\textdef@@#1{\textdef@#1\rm\textdef@#1\bf\textdef@#1\sl\textdef@#1\it}
\def\DN@{\def\next@}
\def\eat@#1{}
\def\textdef@#1#2{%
 \DN@{\csname\expandafter\eat@\string#2fam\endcsname}%
 \if S#1\edef#2{\the\scriptfont\next@\relax}%
 \else\if s#1\edef#2{\the\scriptscriptfont\next@\relax}%
 \else\edef#2{\the\textfont\next@\relax}\fi\fi}
\def\Let@{\relax\iffalse{\fi\let\\=\cr\iffalse}\fi}
\def\vspace@{\def\vspace##1{\crcr\noalign{\vskip##1\relax}}}
\def\multilimits@{\bgroup\vspace@\Let@
 \baselineskip\fontdimen10 \scriptfont\tw@
 \advance\baselineskip\fontdimen12 \scriptfont\tw@
 \lineskip\thr@@\fontdimen8 \scriptfont\thr@@
 \lineskiplimit\lineskip
 \vbox\bgroup\ialign\bgroup\hfil$\m@th\scriptstyle{##}$\hfil\crcr}
\def\Sb{_\multilimits@}
\def\endSb{\crcr\egroup\egroup\egroup}
\def\Sp{^\multilimits@}
\newdimen\ex@
\def\rightarrowfill@#1{$#1\m@th\mathord-\mkern-6mu\cleaders
 \hbox{$#1\mkern-2mu\mathord-\mkern-2mu$}\hfill
 \mkern-6mu\mathord\rightarrow$}
\def\leftarrowfill@#1{$#1\m@th\mathord\leftarrow\mkern-6mu\cleaders
 \hbox{$#1\mkern-2mu\mathord-\mkern-2mu$}\hfill\mkern-6mu\mathord-$}
\def\leftrightarrowfill@#1{$#1\m@th\mathord\leftarrow\mkern-6mu\cleaders
 \hbox{$#1\mkern-2mu\mathord-\mkern-2mu$}\hfill
 \mkern-6mu\mathord\rightarrow$}
\def\overrightarrow{\mathpalette\overrightarrow@}
\def\overrightarrow@#1#2{\vbox{\ialign{##\crcr\rightarrowfill@#1\crcr
 \noalign{\kern-\ex@\nointerlineskip}$\m@th\hfil#1#2\hfil$\crcr}}}
\def\overleftarrow{\mathpalette\overleftarrow@}
\def\overleftarrow@#1#2{\vbox{\ialign{##\crcr\leftarrowfill@#1\crcr
 \noalign{\kern-\ex@\nointerlineskip}$\m@th\hfil#1#2\hfil$\crcr}}}
\def\overleftrightarrow{\mathpalette\overleftrightarrow@}
\def\overleftrightarrow@#1#2{\vbox{\ialign{##\crcr\leftrightarrowfill@#1\crcr
 \noalign{\kern-\ex@\nointerlineskip}$\m@th\hfil#1#2\hfil$\crcr}}}
\def\underrightarrow{\mathpalette\underrightarrow@}
\def\underrightarrow@#1#2{\vtop{\ialign{##\crcr$\m@th\hfil#1#2\hfil$\crcr
 \noalign{\nointerlineskip}\rightarrowfill@#1\crcr}}}
\def\underleftarrow{\mathpalette\underleftarrow@}
\def\underleftarrow@#1#2{\vtop{\ialign{##\crcr$\m@th\hfil#1#2\hfil$\crcr
 \noalign{\nointerlineskip}\leftarrowfill@#1\crcr}}}
\def\underleftrightarrow{\mathpalette\underleftrightarrow@}
\def\underleftrightarrow@#1#2{\vtop{\ialign{##\crcr$\m@th\hfil#1#2\hfil$\crcr
 \noalign{\nointerlineskip}\leftrightarrowfill@#1\crcr}}}
\def\frac#1#2{{#1 \over #2}}
\def\GRAPHICSPS#1{%
\ifnum\GRAPHICSTYPE=1 language "PS", include "#1"\else%
ps: #1\fi}
\def\graffile#1#2#3#4{\leavevmode\raise -#4 \hbox{%
\raise #3 \hbox{\rule{0.003in}{0.003in}\special{#1}}}%
{\raise -#4 \hbox to #2 {\vrule height#3 width0in depth0in\hfil}}%
}
\def\draftbox#1#2#3#4{\leavevmode\raise -#4 \hbox{\frame{\rlap{\protect\tiny #1}%
\hbox to #2{\vrule height#3 width0in depth0in\hfil}}}}
\def\GRAPHIC#1#2#3#4#5{\ifnum\draft=1 \draftbox{#2}{#3}{#4}{#5}\else%
\graffile{#1}{#3}{#4}{#5}\fi}
\def\addtoLaTeXparams#1{\edef\LaTeXparams{\LaTeXparams #1}}
\def\doFRAMEparams#1{\readFRAMEparams#1\end}
\def\readFRAMEparams#1{%
\ifx#1\end% 
\let\next=\relax%
\else%
\ifx#1i%
\dispkind=0%
\fi%
\ifx#1d%
\dispkind=1%
\fi%
\ifx#1f%
\dispkind=2%
\fi%
\ifx#1t%
\addtoLaTeXparams{t}%
\fi%
\ifx#1b%
\addtoLaTeXparams{b}%
\fi%
\ifx#1p%
\addtoLaTeXparams{p}%
\fi%
\ifx#1h%
\addtoLaTeXparams{h}%
\fi%
\let\next=\readFRAMEparams%
\fi%
\next%
}
\def\IFRAME#1#2#3#4#5{\GRAPHIC{#5}{#4}{#1}{#2}{#3}}
\def\DFRAME#1#2#3#4{
  \begin{center}
    \GRAPHIC{#4}{#3}{#1}{#2}{0in} 
  \end{center}
}
\def\FFRAME#1#2#3#4#5#6#7{
  \begin{figure}[#1]
    \begin{center}
      \GRAPHIC{#7}{#6}{#2}{#3}{0in}
    \end{center}
    \caption{\label{#5}#4}
  \end{figure}
}
\def\FRAME#1#2#3#4#5#6#7#8{%
%%%??? \newcount\dispkind%
\def\LaTeXparams{}%
\dispkind=0%
\def\LaTeXparams{}%
\doFRAMEparams{#1}%
\ifnum\dispkind=0%
\IFRAME{#2}{#3}{#4}{#7}{#8}%
\else
  \ifnum\dispkind=1
    \DFRAME{#2}{#3}{#7}{#8}
  \else
    \ifnum\dispkind=2
      \FFRAME{\LaTeXparams}{#2}{#3}{#5}{#6}{#7}{#8}
    \fi
  \fi
\fi
}
\long\def\QQQ#1#2{}
\def\QTP#1{}
\long\def\QQA#1#2{}
\def\EXPAND#1[#2]#3{}
\def\NOEXPAND#1[#2]#3{}
\def\LaTeXparent#1{}
\def\input gnuindex.sty\makeindex{\input gnuindex.sty\makeindex}
\def\initial#1{\bigbreak{\raggedright\large\bf #1}\kern 2pt\penalty3000}
\newdimen\theight
\def \Column{%
             \vadjust{\setbox0=\hbox{\scriptsize\quad\quad tcol}%
             \theight=\ht0
             \advance\theight by \dp0    \advance\theight by \lineskip
             \kern -\theight \vbox to \theight{\rightline{\rlap{\box0}}%
             \vss}%
             }}%
\def\qed{\ifhmode\unskip\nobreak\fi\ifmmode\ifinner\else\hskip5\p@\fi\fi
 \hbox{\hskip5\p@\vrule width4\p@ height6\p@ depth1.5\p@\hskip\p@}}
\begin{document}

\title{\textbf{Straightening skewed markets with an index tracking optimizationless portfolio}}
\author{Daniele Bufalo$^1$, Michele Bufalo$^2$, Francesco Cesarone$^3$$^\star$, Giuseppe Orlando$^4$ \\
{\small $^1$\emph{ University of Bari - Department of Computer Science}}\\
{\footnotesize daniele.bufalo@uniba.it}\\
{\small $^2$\emph{ Sapienza University of Rome - Department of Methods and Models for
Economics, Territory and Finance}}\\
{\footnotesize michele.bufalo@uniroma1.it}\\
{\small $^3$\emph{ Roma Tre University - Department of Business Studies}}\\
{\footnotesize francesco.cesarone@uniroma3.it}\\
{\small $^4$\emph{ University of Bari - Department of Finance and Economics}}\\
{\footnotesize giuseppe.orlando@uniba.it}\\
{\small $\star$ Corresponding author}\\
}
%\date{September 18, 2018}
\date{\today}
\maketitle

\begin{abstract}
	\noindent
Among professionals and academics alike, it is well known that active portfolio management is unable to provide additional risk-adjusted returns relative to their benchmarks. For this reason, passive wealth management has emerged in recent decades to offer returns close to benchmarks at a lower cost. In this article, we first refine the existing results on the theoretical properties of oblique Brownian motion.
Then, assuming that the returns follow skew geometric Brownian motions and that they are correlated, we describe some statistical properties for the \emph{ex-post}, the \emph{ex-ante} tracking errors, and the forecasted tracking portfolio.
To this end, we develop an innovative statistical methodology, based on a benchmark-asset principal component factorization, to determine a tracking portfolio that replicates the performance of a benchmark by investing in a subset of the investable universe.
This strategy, named hybrid Principal Component Analysis (hPCA), is applied both on normal and skew distributions.
In the case of skew-normal returns, we propose a framework for calibrating the model parameters, based on the maximum likelihood estimation method. %
For testing and validation, we compare four alternative models for index tracking. The first two are based on the hPCA when returns are assumed to be normal or skew-normal. The third model adopts a standard optimization-based approach and the last one is used in the financial sector by some practitioners.
For validation and testing, we present a thorough comparison of these strategies on real-world data, both in terms of performance and computational efficiency.
%\textcolor{blue}{
A noticeable result is that, not only, the suggested lean PCA-based portfolio selection approach compares well versus cumbersome algorithms for optimization-based portfolios, but, also, it could provide a better service to the asset management industry.

\medskip
\noindent
\textbf{Keywords}: Index tracking, Passive fund management, Portfolio optimization, Tracking error, Skewed distributions.

% \PACS{PACS code1 \and PACS code2 \and more}
% \subclass{MSC code1 \and MSC code2 \and more}

\medskip
\noindent
\textbf{JEL classification}: G11, C44, C61, C53.
\end{abstract}

\section{Introduction}\label{sec:Intro}

Passive asset management has gained momentum in the last decade shifting the focus from fund picking and management selection to asset allocation.
This is because of clear evidence that actively managed funds consistently underperform their benchmarks.
This has been reported by popular publications such as the so-called S\&P Indices Versus Active (SPIVA) scorecard \citep{SPIVA} and the literature.
For example, \cite{Crane2018} argue ``stochastic dominance tests suggest no risk-averse investor should choose a random active fund over a random index fund''.

\noindent
To confirm that, Table \ref{T:SPIVA} reports that the percentage of USA equity funds outperformed by the S\&P500 index over 15 years of data varies from 63\% in the short term to 87\% in the long term (similar results are available for fixed income funds).
\begin{table}[!ht]

  \centering

\begin{tabular}{@{}llllll@{}}
\toprule
        & \multicolumn{5}{l}{Percentage of USA   Equity Funds Outperformed by Benchmarks} \\ \midrule
Index   & 1-YEAR (\%)   & 3-YEAR (\%)   & 5-YEAR (\%)   & 10-YEAR (\%)   & 15-YEAR (\%)   \\
S\&P500 & 63.17         & 71.24         & 77.97         & 82.06          & 86.92
 \\
 \bottomrule
 %\multicolumn{6}{l}{\qquad Source S\&P Dow Jones Indices \cite{SPIVA2020}}
\end{tabular}
\caption{Percentage of USA equity funds outperformed by the S\&P500 index \citep{SPIVA2020}.}  \label{T:SPIVA}
\end{table}

\noindent
A consequence of that is the growing importance of alternative investments, the offer of absolute return funds, and a massive boost to passive management so that, a company like Black Rock, with almost USD 9 trillion of assets under management \citep{BlackRock2021}, has become the biggest company in the wealth management industry.

Passive management managed funds could track their benchmark either by holding all the stocks or by investing in futures. The first approach is problematic in the sense that it may involve a large number of transactions, the second exposes the investor to the rollover and risk of derivative markets (mostly OTC) and related liquidity, credit and counterparty risk.

\noindent
Index tracking (IT) is a popular problem for tackling passive management portfolio constructions and it has received large attention in the literature.
Comprehensive surveys are contained in \cite{beasley2003evolutionary} and \cite{canakgoz2009mixed}.
Over the past decade, the inclusion of constraints, that better adapt the IT models to real-world applications, has led to an increase in the complexity of models.
Index tracking, basically, consists of a constrained optimization problem, where the distance between a given benchmark and the tracking portfolio is minimized by using a predefined number of assets less than those available in the investment universe. This cardinality constrained minimization with a quadratic tracking error measure has been proven to be NP-Hard by \cite{ruiz2009hybrid}.
Consequently, solving IT problems poses serious challenges in terms of the computational burden, especially for large-size problems.
%\textbf{
%}

%
%
%To complicate the matter,  a small number of selected securities may lead to inefficient portfolios and holding too large amounts of few assets.

%%%%%%%%%%%%%%%%%%%%%%%

The main aim of this paper is to develop a new statistical method based on an improved Principal Component Analysis (PCA), by which we provide an optimizationless strategy for building a tracking (small) portfolio that replicates a benchmark.
More precisely, assuming that assets' returns may be skew-normal, we present a procedure based on the similarities between the eigenvalues determined by a PCA on each benchmark-asset pair, for the selection of assets to be included in the tracking portfolio. Such a portfolio, as we shall see, exhibits promising index tracking capabilities. Apart from the dramatic gain in computation speed, the advantage is to provide a viable solution to the asset management industry in terms of both reliability of tracking error expectations and reductions of losses in turbulent markets.
To achieve this, our contributions are manifold.

First of all, a) we refine existing results on the theoretical properties of
the skew Brownian motion.
Then, b) assuming that the returns follow skew geometric Brownian motions and that they are correlated, we describe some statistical properties for the \emph{ex-post} tracking error, the \emph{ex-ante} tracking error, and the forecasted tracking portfolio.
To this end, c) we develop an innovative statistical methodology, based on a benchmark-asset principal component factorization, to determine a tracking portfolio that replicates the performance of a benchmark by investing in a subset of the investable universe.
This strategy, named hybrid PCA (hPCA), is applied both on normal and skew distributions.
In the case of skew-normal returns, d) we propose a procedure  for calibrating the model parameters, based on the maximum likelihood estimation method.
For testing, e) we compare four alternative models for index tracking. The first two are based on the hPCA when returns are assumed to be normal or skew-normal. The third model adopts a well-known approach from the literature  \citep[see, e.g.,][]{scozzari2013exact} and the last one is used in the financial sector by some practitioners.
For validation and testing, we present a thorough comparison of these strategies on real-world data, both in terms of performance and computational efficiency.

The rest of this article is organized as follow. Section \ref{Sec:LitRev} provides a brief account of the literature.
In Section \ref{Sec:MatMet}, we discuss several properties of the skew-normal random variables, and
of the skew arithmetic and geometric Brownian motions, which are relevant to our framework.
The portfolio selection models for index tracking are presented in Section \ref{Sec:algo}.
More precisely, in Section \ref{hPCA} we introduce the hPCA approach for normal and skew-normal returns, while in Section \ref{Sec:BaseMod} we describe two portfolio optimization models for index tracking used in the literature and the financial industry.
In Section \ref{Sec:NumSim}, all these IT strategies are tested on real-world data and compared both in
terms of performance and computational efficiency.
Section \ref{Sec:Conclusion} concludes.

\subsection{\textbf{Literature review}} \label{Sec:LitRev}

In this section, we provide a brief review that seeks to cover all the mathematical and statistical literature involved in the present research.
Clearly, the following account is not intended to be exhaustive of the literature, but only  highlights the different theoretical and practical aspects we have focused on.
\medskip

As mentioned in the introduction, in this article we illustrate an index tracking (IT) strategy that minimizes the tracking error between a tracking portfolio and its benchmark  by selecting a small number of constituents without resorting to any optimization method.
\paragraph{On index tracking portfolios}
\hfill
\break
\noindent
Index tracking portfolios have been studied in the literature since the 1990's \citep{Roll1992,Rudolf1999}.
Among the former, \cite{Frino2002} reported that actively managed equity funds in Australia were affected by a significant tracking error. To reduce that, \cite{Focardi2004} suggested clustering the portfolio in a way to discover the correlation and cointegration structure of the benchmark to avoid the burden of forecasting and optimization.

\noindent
Early index tracking overviews and related approaches can be found in \cite{beasley2003evolutionary,canakgoz2009mixed,guastaroba2012kernel,wang2012mixed,edirisinghe2013index,scozzari2013exact}.
One can mention  \cite{beasley2003evolutionary,guastaroba2012kernel} on heuristic approaches to the IT portfolio optimization and \cite{canakgoz2009mixed} on mixed-integer programming.
Differential Evolution and Combinatorial Search for Index Tracking (DECSI)
by \cite{Krink2009} proved to be an efficient and accurate heuristic approach for large problems. In case of small- and medium-sized problems, one can opt for the Mixed Integer Quadratic Programming (MIQP) approach provided by \cite{scozzari2013exact}.
\cite{Chen2012} proposed a binary algorithm to seek the maximization of the similarity between portfolio and reference index.
\cite{Bruni2015} presented an
indexing method to provide an excess return compared to the considered benchmark with the idea to maximize the average excess return and minimize underperformance. The advantage of this approach is that the replicating portfolio can be optimized using standard linear programming techniques \citep[see also][]{bruni2012new,bruni2017exact}.

\paragraph{On the tracking error for portfolio optimization}
\hfill
\break
\noindent
Asset managers often use tracking error (TE) as a measure of the risk of deviating from the benchmark, i.e., the TE risk.
Indeed, alternative approaches in portfolio optimization are those with TE constraints.
A common approach to TE-based portfolio optimization is to place some restrictions on it and to minimize or maximize other objectives. \cite{Jorion2003} was one of the first to suggest a risk/reward optimization based on a targeted TE. More recently,  \cite{Maxwell2018} has suggested an optimization model based on maximizing the Sharpe ratio on the constant TE frontier in the absolute risk/return space.
In any case, the problem with TE-bound portfolios is that they are ``bounded by an elliptical boundary in the mean/variance space and may not be efficient'' \citep{Maxwell2019}.

\medskip

The rationale behind our index tracking approach is to some extent to reduce the dimensionality of the investable universe into a small set of assets that better match the performance of the benchmark.
\paragraph{On PCA applied to portfolio management}
\hfill
\break
\noindent
Principal component analysis (PCA) is a technique that extracts some information from multivariate random variables data and transforms it into a new set of orthogonal variables, which are called principal components \citep[see, e.g.,][]{Abdi2010}.
When sorting the principal components from largest to smallest, the former retains most of the variability present in the original data.
The purpose of PCA is to reduce the dimensionality of a dataset, allowing to represent as much as possible the original variability of a multivariate random variable \citep[see, e.g.,][]{Jolliffe2003}.

The quality of the PCA method can be evaluated using cross-validation techniques, such as the bootstrap and the jackknife \citep[see, e.g.,][]{Choi2021}. Furthermore, PCA can be generalized both as correspondence analysis to handle qualitative variables and as multifactorial analysis \citep[see, e.g.,][]{kreinin1998principal,Murakami2020} to examine heterogeneous sets of variables.
Mathematically, PCA consists of the eigendecomposition of positive semi-definite matrices and the singular value decomposition (SVD) of rectangular matrices.
PCA has been applied to several problems in finance concerning risk decomposition \citep{Pasini2017}, portfolio optimization \citep{Zorivcic2020}, and stock trading \citep{Guo2020}.
\cite{Martellini2004}, through PCA, tried to reconcile investability with the representativeness in the hedge fund space.
Similarly, \cite{Nadkarni2018} used PCA and NeuroEvolution of Augmenting Topologies to derive a trading signal ``capable of high returns and daily profits with low associated risk''.
\cite{Antoniou2016} found an application of PCA to investor sentiment, beta and the cost of equity. \cite{Ouyang2019} suggests a deep autoencoder (i.e., a non-linear generalization of PCA) to track index performance and perform a dynamic weight calculation method. \cite{Cao2020} built a PCA-based stock price prediction model in conjunction with a three backpropagation neural network to devise an investment stock selection strategy.

\medskip
\paragraph{On skewed distributions of returns}
\hfill
\break
\noindent
As mentioned, we suggest a new statistical method based on an improved PCA, whereby we provide an optimizationless strategy for determining a tracking portfolio that replicates the benchmark by investing in a possibly small number of assets. This is considering  both normal and skew distributed returns.
Indeed, some scholars claim that standardized daily returns ``are approximately unconditionally normally distributed'' \citep{Andersen2001} or that ``they are IID Gaussian, with variance equal to 1'' \citep{Rogers2018}. In real life, instead,  returns, either standardized or not, do not conform to those hypotheses \citep[see, e.g.,][]{cont2001empirical,OB2021empirical}. To account for skewness and extra-kurtosis, one may consider skew-normal distributions as first introduced by \cite{Azzalini1985} and  \cite{Henze1986}.
\cite{Kim2001} was among those who exploited their properties, and who further improved the framework by introducing the t-skew distribution as scale mixtures of skew-normal distributions.   \cite{azzalini2003distributions} followed by providing a general account of the t-skew class of densities by stressing their ability to fit heavy-tailed and skewed data. In particular, the inclusion of the normal law and the shape parameter regulating the skewness, allows for a continuous variation from normality to non-normality \citep[see][]{azzalini2021overview}. Recently, \cite{Bufalo2022forecasting} described an application to forecasting portfolio returns with skew-geometric Brownian motions in presence of cross dependency between assets.

%This is exactly the subject we are going to introduce and discuss in the next section.
\noindent
Index tracking within the context of skewed distributions is the topic that we will introduce and discuss in the next section.

%The downside is that the proposed solutions often assume Gaussian distributions of returns which, empirically, has been challenged \citep[see][]{OB2021empirical}. In this article, instead, we use skew distributions, as introduced by \cite{Azzalini1985} and \cite{Henze1986}. Moreover, denoted $K$ the maximum number of securities that a fund intends to manage, we suggest an
%\blue{
%optimizationless strategy based on an improved PCA. While most studies focus on optimization techniques, in this paper we suggest an algorithm based on the similarities between the eigenvalues of both the benchmark and those of the candidate constituents while minimizing the tracking error of the indexed portfolio. Apart from the dramatic gain in computational speed, the advantage is to provide a valid solution to the asset management industry in terms of the reliability of tracking error expectations and reductions of losses in turbulent markets.}

%\medskip

%Building on the above literature, in this article we illustrate an algorithm that given a number of portfolio constituents minimizes the TE. This is of particular interest to passive managers as they reduce the cost of monitoring and transactions. Also, as a benchmark, we take a modified version of PCA to compare, which algorithm performs better in terms of portfolio returns.

\section{Theoretical framework} \label{Sec:MatMet}

\subsection{\textbf{Preliminary concepts}}\label{Sec:Notation}

For completeness and readability, below we report some notions on skewed random variables (Section \ref{sec:Skew-normal}) and skewed stochastic processes (Section \ref{Sub1}).
Throughout the paper, we consider a filtered probability space $(\Omega, \P,\F,(\F_t)_{t\geq 0})$, where all random variables and stochastic processes are defined.

\subsubsection{Skew-normal density}\label{sec:Skew-normal}

We recall here some well-known notions about skew-normal random variables
\cite[see, e.g.,][]{azzalini2013skew}.
\begin{definition}[Skew-normal random variable]\label{sSN}
A random variable $X$ is said to be a standard skew-normal if its probability density function (pdf) is as follows
\begin{equation*}
f_{X}(x) = 2\phi(x)\Phi(\beta x) \qquad \mbox{with} \qquad x\in \R,
\end{equation*}
where $\phi$ is the standard normal pdf, $\Phi$ is the standard normal cumulative distribution function (cdf), and $\beta\in \R$ is the \emph{shape} parameter.
If $\beta=0$, then we have a normal distribution, while if the absolute value of $\beta$ increases, then the absolute value of the skewness increases.
More precisely, for $\beta >0$ ($\beta <0$) pdf is right (left) skewed.

\noindent
The standard skew-normal random variable can be generalized using the following affine transformation $Y=\xi+\omega X$, where $\xi\in \R$ is the location parameter and $\omega\in R_+$ is the scale parameter.
The pdf of a generalized skew-normal r.v. $Y\sim SN(\xi,\omega^2,\beta)$ is
\begin{equation*}\label{pSN}
f_{Y}(y)=\frac{2}{\omega}\phi\biggl( \frac{y-\xi}{\omega}\biggr)\Phi\biggl( \beta\,\frac{y-\xi}{\omega}\biggr) \qquad \mbox{with} \qquad y\in \R .
\end{equation*}
\end{definition}
\noindent In the following proposition we report the expressions for the moment-generating function, the expected value, and the variance of a skew-normal r.v..
\begin{lemma}[\cite{azzalini2013skew}, Lemma 2.2] \label{prop:moment-generating}
If $Y\sim SN(\xi,\omega^2,\beta)$ is a skew-normal random variable, its moment-generating function is given by
\begin{equation*}
\E\bigl[ e^{kY} \bigr]= 2e^{k\xi+\frac{k^2\omega^2}{2}}\Phi(k\delta \omega),
\end{equation*}
where
\begin{equation}\label{delta}
\delta=\frac{\beta}{\sqrt{1+\beta^2}}.
\end{equation}
Therefore, we have
\begin{equation*}\label{meanZ}
\E[Y] = \xi+\omega \delta \sqrt{\frac{2}{\pi}},
\end{equation*}
and
\begin{equation*}\label{varZ}
\Var[Y]=\biggl( 1-\frac{2\delta^2}{\pi}\biggr)\omega^2.
\end{equation*}
\end{lemma}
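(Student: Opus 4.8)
The plan is to reduce the statement to the standard skew-normal $X\sim SN(0,1,\beta)$, since $Y=\xi+\omega X$ gives $\E[e^{kY}]=e^{k\xi}\E[e^{k\omega X}]$, $\E[Y]=\xi+\omega\E[X]$ and $\Var[Y]=\omega^2\Var[X]$, so it suffices to handle the standardized case and then rescale. For that case I would start from $M(t):=\E[e^{tX}]=\int_{\R}e^{tx}\,2\phi(x)\Phi(\beta x)\,dx$ and complete the square via $e^{tx}\phi(x)=e^{t^2/2}\phi(x-t)$; after the substitution $u=x-t$ this becomes
\begin{equation*}
M(t)=2e^{t^2/2}\int_{\R}\phi(u)\Phi\bigl(\beta(u+t)\bigr)\,du=2e^{t^2/2}\,\E\bigl[\Phi\bigl(\beta(Z+t)\bigr)\bigr],\qquad Z\sim N(0,1).
\end{equation*}

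The crux — and the step I expect to be the main obstacle — is evaluating $\E[\Phi(aZ+b)]$ for $Z\sim N(0,1)$. I would establish the identity $\E[\Phi(aZ+b)]=\Phi\bigl(b/\sqrt{1+a^2}\bigr)$ by a probabilistic trick: introduce $W\sim N(0,1)$ independent of $Z$, and write $\E[\Phi(aZ+b)]=\P(W\le aZ+b)=\P(W-aZ\le b)$; since $W-aZ\sim N(0,1+a^2)$, the right-hand side is $\Phi\bigl(b/\sqrt{1+a^2}\bigr)$. Applying this with $a=\beta$, $b=\beta t$ gives $\E[\Phi(\beta(Z+t))]=\Phi\bigl(\beta t/\sqrt{1+\beta^2}\bigr)=\Phi(\delta t)$ with $\delta$ as in \eqref{delta}, hence $M(t)=2e^{t^2/2}\Phi(\delta t)$; substituting $t=k\omega$ and multiplying by $e^{k\xi}$ yields the stated moment-generating function. (Alternatively one could differentiate under the integral sign to obtain the ODE $M'(t)=tM(t)+2\delta e^{t^2/2}\phi(\delta t)$ and integrate it, but the argument above is cleaner.)

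Finally, the mean and variance follow by differentiating $M$ at the origin. From $M(t)=2e^{t^2/2}\Phi(\delta t)$ one gets $M'(t)=2te^{t^2/2}\Phi(\delta t)+2\delta e^{t^2/2}\phi(\delta t)$, so $M'(0)=2\delta\phi(0)=\delta\sqrt{2/\pi}$; differentiating once more and using $\phi'(0)=0$ together with $\Phi(0)=\tfrac12$ gives $M''(0)=1$. Hence $\E[X]=\delta\sqrt{2/\pi}$ and $\Var[X]=\E[X^2]-\E[X]^2=1-2\delta^2/\pi$, and rescaling through $Y=\xi+\omega X$ produces $\E[Y]=\xi+\omega\delta\sqrt{2/\pi}$ and $\Var[Y]=(1-2\delta^2/\pi)\omega^2$. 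Once the Gaussian-integral identity is in place, everything else is routine bookkeeping: a completion of the square, one change of variables, and two derivatives evaluated at $0$.
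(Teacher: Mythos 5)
Your derivation is correct: the completion of the square, the Gaussian identity $\E[\Phi(aZ+b)]=\Phi\bigl(b/\sqrt{1+a^2}\bigr)$ proved via the auxiliary variable $W$, and the two differentiations at the origin all check out, and they reproduce exactly the stated MGF, mean, and variance. Note that the paper itself offers no proof here — it simply cites Lemma 2.2 of \cite{azzalini2013skew} — and your argument is essentially the standard one found in that reference, so there is nothing to reconcile.
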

\noindent From Lemma \ref{prop:moment-generating}, it can be proved an interesting property that will be helpful in the following.
\begin{proposition}[\cite{azzalini2013skew}, Proposition 2.3] \label{prop21}

If $Y_1\sim SN(\xi,\omega^2,\beta)$ and $Y_2\sim N(\mu,\sigma^2)$ are two independent random variables, then
\begin{equation*}
Y_1+Y_2\sim SN(\xi+\mu,\omega^2+\sigma^2,\widetilde{\beta}),
\end{equation*}
where
$$
\widetilde{\beta}=\frac{\beta}{\sqrt{1+(1+\beta^2)\frac{\sigma^2}{\omega^2}}}.
$$
\end{proposition}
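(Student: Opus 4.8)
The plan is to use the moment-generating function from Lemma 2.2 together with the uniqueness theorem for moment-generating functions. First I would compute the mgf of the sum $Y_1 + Y_2$ by independence: $\E[e^{k(Y_1+Y_2)}] = \E[e^{kY_1}]\,\E[e^{kY_2}]$. For the first factor, Lemma 2.2 gives $\E[e^{kY_1}] = 2 e^{k\xi + k^2\omega^2/2}\,\Phi(k\delta\omega)$ with $\delta = \beta/\sqrt{1+\beta^2}$; for the second factor, the ordinary Gaussian mgf gives $\E[e^{kY_2}] = e^{k\mu + k^2\sigma^2/2}$. Multiplying, the product becomes
\begin{equation*}
\E\bigl[e^{k(Y_1+Y_2)}\bigr] = 2\, e^{k(\xi+\mu) + \frac{k^2(\omega^2+\sigma^2)}{2}}\,\Phi(k\delta\omega).
\end{equation*}

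Next I would match this against the mgf of a generic $SN(\widetilde\xi,\widetilde\omega^2,\widetilde\beta)$ variable, namely $2 e^{k\widetilde\xi + k^2\widetilde\omega^2/2}\Phi(k\widetilde\delta\widetilde\omega)$ with $\widetilde\delta = \widetilde\beta/\sqrt{1+\widetilde\beta^2}$. Comparing the exponential factors forces $\widetilde\xi = \xi+\mu$ and $\widetilde\omega^2 = \omega^2 + \sigma^2$. Comparing the arguments of $\Phi$ forces the scalar identity $\widetilde\delta\,\widetilde\omega = \delta\,\omega$, i.e. $\widetilde\delta = \delta\omega/\sqrt{\omega^2+\sigma^2}$. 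It then remains to solve for $\widetilde\beta$ from $\widetilde\beta/\sqrt{1+\widetilde\beta^2} = \widetilde\delta$, which gives $\widetilde\beta = \widetilde\delta/\sqrt{1-\widetilde\delta^2}$; substituting $\widetilde\delta = \delta\omega/\sqrt{\omega^2+\sigma^2}$ and $\delta = \beta/\sqrt{1+\beta^2}$ and simplifying should yield exactly
\begin{equation*}
\widetilde\beta = \frac{\beta}{\sqrt{1 + (1+\beta^2)\frac{\sigma^2}{\omega^2}}}.
\end{equation*}
Finally, invoking the fact that the mgf (which here is finite on all of $\R$) determines the distribution uniquely, we conclude $Y_1+Y_2 \sim SN(\xi+\mu,\,\omega^2+\sigma^2,\,\widetilde\beta)$.

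The only genuinely delicate point is the algebraic inversion in the last step: one must be careful that $\widetilde\delta$ and $\delta$ carry the same sign as $\widetilde\beta$ and $\beta$ respectively (the map $\beta \mapsto \beta/\sqrt{1+\beta^2}$ is an odd bijection from $\R$ onto $(-1,1)$), so that taking the positive square root $\sqrt{1-\widetilde\delta^2}$ is legitimate and no spurious sign is introduced; a quick check that $1-\widetilde\delta^2 = \sigma^2/(\omega^2+\sigma^2) + \omega^2/((\omega^2+\sigma^2)(1+\beta^2)) > 0$ confirms the expression is well defined. Everything else is a routine mgf computation, and no step should present a real obstacle beyond bookkeeping.
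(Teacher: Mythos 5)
Your proof is correct, and it follows exactly the route the paper itself indicates: the paper states this result as a citation of Azzalini--Capitanio (Proposition 2.3 there) without giving a proof, but prefaces it with the remark that it ``can be proved'' from the moment-generating-function lemma (Lemma \ref{prop:moment-generating}), which is precisely your argument. The mgf computation, the identification $\widetilde{\delta}\,\widetilde{\omega}=\delta\,\omega$, and the algebraic inversion to $\widetilde{\beta}=\beta/\sqrt{1+(1+\beta^2)\sigma^2/\omega^2}$ all check out, including the sign consideration you flag.
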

%
%\begin{proof}
%See Proposition 2.3 in \cite{azzalini2013skew}.
%\end{proof}

\noindent
In the next section, we provide some concepts about the skew arithmetic and geometric Brownian motions by which asset returns and prices are modeled, respectively, in our approach to IT.

\subsubsection{Skew arithmetic and geometric Brownian motion}\label{Sub1}

The Skew Arithmetic Brownian Motion (SABM) was firstly proposed by \cite{Ito1965} as a  generalization of the classical (Arithmetic) Brownian motion.
%
%\green{Let us start by giving a standard definition of SABM process.}
%

\noindent
In the following, we recall a standard definition of SABM process
\citep[see][]{atar2015multi,azzalini2021overview,Ito1974}.

\begin{definition}[Skew Arithmetic Brownian Motion]\label{SBM}
The SABM $Y(t)$ with $t \in [0, T]$ is a continuous-time stochastic process characterized by the following properties:
\begin{itemize}
\item[i)] $Y(0)=0$;
\item[ii)] for any $t \in [0, T]$ has continuous sample paths;
\item[iii)] for any $t_1,t_2 \in [0, T]$, with $t_1<t_2$, the increments $Y(t_2)-Y(t_1)$ are independent and skew-normally distributed, i.e., $Y(t_2)-Y(t_1)\sim SN(0,t_2-t_1,\beta)$.
\end{itemize}
\end{definition}
\noindent As shown by \cite{corns2007skew} (see Proposition 2.1),
a SABM $Y(t)$ can be constructed by the sum of a Brownian motion and a reflected Brownian motion, namely
\begin{equation}\label{sbm_alternative}
Y(t)=\sqrt{1-\delta^2} \, W_1(t)+\delta  \,  \lvert W_2(t) \rvert ,
\end{equation}
where $\delta$ is as in \eqref{delta}, and $W_1(t)$ and $W_2(t)$ are independent Brownian motions.
Hereafter, we represent a SABM $Y(t)$ by Eq. \eqref{sbm_alternative}, which complies with Definition \ref{SBM}.
\begin{remark} \label{rem:Zhu}
As described by \cite{zhu2018new}, the reflected Brownian motion does not have stationary increments, and the same occurs for SABM.
Then, for any $0\leq s < t$, $Y(t)$ has the following conditioning pdf
\begin{equation*}\label{cond_dist_Y}
f_{Y(t) \mid Y(s)}=f_{\sqrt{1-\delta^2}  \, W_1(t) \mid \sqrt{1-\delta^2}  \, W_1(s)} \ast f_{\delta  \, \lvert W_2(t) \rvert \delta  \, \lvert W_2(t) \rvert }
\end{equation*}
where $\ast$ denotes the convolution product,
\begin{equation}\label{f1}
f_{\sqrt{1-\delta^2}\cdot W_1(t) \mid \sqrt{1-\delta^2}\cdot W_1(s)}(x_1 \mid u_1(s))=\frac{1}{\sqrt{2\pi(1-\delta^2)(t-s)}}e^{-\frac{(x_1-u_1(s))^2}{2(1-\delta^2)(t-s)}} \quad (x_1\in \R),
\end{equation}
and
\begin{equation}\label{f2}
f_{\delta  \mid W_2(t) \mid \delta  \mid W_2(t) \mid }(x_2 \mid u_2(s))=\frac{1}{\delta\sqrt{2\pi (t-s)}}\biggl(e^{-\frac{(x_2-u_2(s))^2}{2\delta^2(t-s)}}+e^{-\frac{(x_2+u_2(s))^2}{2\delta^2(t-s)}}\biggr) \quad (x_2\in \R_+).
\end{equation}
In Expressions \eqref{f1} and \eqref{f2},
$u_1(s)$ and $u_2(s)$ represent realizations of $\sqrt{1-\delta^2}\cdot W_1(s)$ and $\delta \cdot  \lvert W_2(s) \rvert $, respectively.
\end{remark}
Using the results presented in Remark \ref{rem:Zhu}, in the following proposition
we show how to compute the conditional expectation and variance of a SABM.
\begin{theorem}[Conditional expectation and variance of a SABM] \label{prop22}
Given a SABM $Y(t)$, defined as in Eq. \eqref{sbm_alternative}, for any $0\leq s <t$, we have that
\begingroup\makeatletter\def\f@size{11}\check@mathfonts
\begin{equation}\label{EYts}
\E[Y(t) \mid \F_s]=u_1(s)-u_2(s)+2u_2(s)\Phi\biggl( \frac{u_2(s)}{\delta\sqrt{t-s}}\biggr)+\delta\sqrt{\frac{2(t-s)}{\pi}}\cdot\phi\biggl( \frac{u_2(s)}{\delta\sqrt{t-s}}\biggr),
\end{equation}
\endgroup
and
%
%\small
%
\begingroup\makeatletter\def\f@size{10}\check@mathfonts
\begin{equation}\label{var_oldts}
\Var\![]Y(t) \mid \F_s]=(t-s)+u_2^2(s)-\biggl( u_2(s)\biggl(2\Phi\biggl( \frac{u_2(s)}{\delta\sqrt{t-s}}\biggr)-1\biggr)+\delta\sqrt{\frac{2(t-s)}{\pi}}\cdot\phi\biggl( \frac{u_2(s)}{\delta\sqrt{t-s}}\biggr)\biggr)^2,
\end{equation}
\endgroup
%
%\normalsize
%
where $u_1(s)$ and $u_2(s)$ are realizations of $\sqrt{1-\delta^2} \, W_1(s)$ and $\delta \,  \lvert W_2(s) \rvert$, respectively. %$\lvert x \rvert$
\end{theorem}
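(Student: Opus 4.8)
The plan is to condition on $\F_s$ and exploit the additive decomposition $Y(t)=\sqrt{1-\delta^2}\,W_1(t)+\delta\,\lvert W_2(t)\rvert$ from \eqref{sbm_alternative}, together with the independence of $W_1$ and $W_2$. Conditionally on $\F_s$ the two summands are independent, so both the conditional expectation and the conditional variance split additively across them, and it suffices to treat each piece separately. The Brownian component is immediate: $W_1$ is a martingale, hence $\E[\sqrt{1-\delta^2}\,W_1(t)\mid\F_s]=\sqrt{1-\delta^2}\,W_1(s)=u_1(s)$ and $\Var[\sqrt{1-\delta^2}\,W_1(t)\mid\F_s]=(1-\delta^2)(t-s)$.

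For the reflected component $\delta\lvert W_2(t)\rvert$ I would use the conditional density $f_{\delta\lvert W_2(t)\rvert\mid\delta\lvert W_2(s)\rvert}$ recorded in Remark \ref{rem:Zhu}, which is exactly the density of a folded normal, i.e.\ of $\lvert N(u_2(s),\delta^2(t-s))\rvert$. Its mean $\E[\delta\lvert W_2(t)\rvert\mid\F_s]$ is then obtained by splitting the integral over $\R_+$ into its two Gaussian bumps, substituting $v=x_2\mp u_2(s)$ in each, and using the elementary primitives $\int v\,e^{-v^2/(2a^2)}\,dv=-a^2e^{-v^2/(2a^2)}$ and $\int e^{-v^2/(2a^2)}\,dv=a\sqrt{2\pi}\,\Phi(\cdot)$ with $a=\delta\sqrt{t-s}$; this yields $u_2(s)\bigl(2\Phi(u_2(s)/(\delta\sqrt{t-s}))-1\bigr)$ plus a Gaussian-density term. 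For the second moment one avoids integration altogether: since $(\delta\lvert W_2(t)\rvert)^2=(\delta W_2(t))^2$, writing $\delta W_2(t)=\delta W_2(s)+\delta(W_2(t)-W_2(s))$, squaring, and taking $\E[\cdot\mid\F_s]$ gives $\E[(\delta\lvert W_2(t)\rvert)^2\mid\F_s]=u_2(s)^2+\delta^2(t-s)$.

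Combining the pieces gives $\E[Y(t)\mid\F_s]=u_1(s)+u_2(s)\bigl(2\Phi(\cdot)-1\bigr)+(\text{density term})$, which is \eqref{EYts} after rewriting $-u_2(s)+2u_2(s)\Phi(\cdot)$; and $\Var[Y(t)\mid\F_s]=(1-\delta^2)(t-s)+\bigl(u_2(s)^2+\delta^2(t-s)\bigr)-\bigl(\E[\delta\lvert W_2(t)\rvert\mid\F_s]\bigr)^2=(t-s)+u_2(s)^2-\bigl(\E[\delta\lvert W_2(t)\rvert\mid\F_s]\bigr)^2$, which is \eqref{var_oldts}.

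The step I expect to be the delicate one is not the integration but justifying the use of the density $f_2$, i.e.\ the sufficiency of the realizations $u_1(s),u_2(s)$: one must observe that $W_1(s)$ and $W_2(s)$ are $\F_s$-measurable, that the increments $W_1(t)-W_1(s)$ and $W_2(t)-W_2(s)$ are independent of $\F_s$ and of each other, and --- the key point --- that the conditional law of $\delta\lvert W_2(t)\rvert$ depends on $W_2(s)$ only through $\lvert W_2(s)\rvert$, because $\lvert N(\mu,\sigma^2)\rvert$ and $\lvert N(-\mu,\sigma^2)\rvert$ have the same distribution. This symmetry is precisely what allows the whole answer to be expressed through $u_1(s)$ and $u_2(s)=\delta\lvert W_2(s)\rvert$ alone; once it is secured, the remaining computations are the routine Gaussian integrals sketched above.
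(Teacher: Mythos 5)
Your proposal is correct and takes essentially the same route as the paper: the additive split of $Y(t)$ into the Brownian and reflected parts, the martingale property for the first, the folded-normal conditional density of Remark \ref{rem:Zhu} for the conditional mean of the second, and the identity $(\delta\lvert W_2(t)\rvert)^2=(\delta W_2(t))^2$ for the conditional second moment; the symmetry argument you add (that the conditional law of $\delta\lvert W_2(t)\rvert$ depends on $W_2(s)$ only through $\lvert W_2(s)\rvert$) is a justification the paper leaves implicit. One caveat: you should not leave the ``Gaussian-density term'' unevaluated, because carrying out your substitutions gives $2\delta\sqrt{t-s}\,\phi\bigl(u_2(s)/(\delta\sqrt{t-s})\bigr)=\delta\sqrt{2(t-s)/\pi}\;e^{-u_2^2(s)/(2\delta^2(t-s))}$, which coincides with the term printed in \eqref{EYts} only if $\phi$ there is read as the unnormalized kernel $e^{-x^2/2}$; with $\phi$ the standard normal pdf as defined in the paper, the printed coefficient is too small by a factor $\sqrt{2\pi}$ (check $s=0$, $u_2(0)=0$: the formula must reduce to $\E[Y(t)]=\delta\sqrt{2t/\pi}$ from Lemma \ref{prop:moment-generating}), and the same remark applies to the squared mean appearing in \eqref{var_oldts}.
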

\begin{proof}
Due to the independence of $W_1(t)$ and $W_2(t)$, we have
$$
\E[Y(t) \mid \F_s]= u_1(s) +\E[\delta \,  \lvert W_2(t) \rvert  \mid \F_s],
$$
with
\begin{equation}\label{cond_exp_abs_W2}
\E[\delta \, \lvert W_2(t) \rvert  \mid \F_s]=\int_0^{+\infty} x_2\, f_{\delta  \lvert W_2(t) \rvert \mid \delta  \lvert W_2(t) \rvert }(x_2 \mid u_2(s))\,dx_2,
\end{equation}
where the conditional pdf $f_{\delta  \lvert W_2(t) \rvert \mid \delta  \lvert W_2(t) \rvert }$ is defined as in Eq. \eqref{f2}, and $u_1(s)$ and $u_2(s)$ represent realizations of $\sqrt{1-\delta^2} \, W_1(s)$ and $\delta \,  \lvert W_2(s) \rvert $, respectively.
Solving the integral of Expression \eqref{cond_exp_abs_W2}, we have
\begingroup\makeatletter\def\f@size{12}\check@mathfonts
$$
\biggl[ \frac{2\sqrt{\frac{\pi}{2}}u_2(s)\bigl( \Phi\bigl( \frac{x_2-u_2(s)}{\delta \sqrt{t-s}} \bigr)- \Phi\bigl( \frac{x_2+u_2(s)}{\delta \sqrt{t-s}} \bigr)\bigr) -\delta\sqrt{t-s}\cdot\bigl( \phi\bigl( \frac{x_2-u_2(s)}{\delta\sqrt{t-s}} \bigr)+\phi\bigl( \frac{x_2+u_2(s)}{\delta\sqrt{t-s}} \bigr) \bigr) }{ \sqrt{2\pi}} \biggr]_0^{+\infty} \, ,
$$
\endgroup
namely
\begin{equation}\label{cond_exp_abs_W}
\E[\delta \, \lvert W_2(t) \rvert  \mid \F_s] = u_2(s)\biggl(2\Phi\biggl( \frac{u_2(s)}{\delta\sqrt{t-s}}\biggr)-1\biggr)+\delta\sqrt{\frac{2(t-s)}{\pi}}\cdot\phi\biggl( \frac{u_2(s)}{\delta\sqrt{t-s}}\biggr),
\end{equation}
which leads to Eq. \eqref{EYts}.
Expression \eqref{var_oldts} can be obtained knowing that
$$
\Var\![\delta \lvert W_{2}(t) \rvert  \mid \F_s]=\E[\delta^2W_{2}^2(t) \mid \F_s]-\bigl( \E[\delta \mid W_{2}(t) \mid  \mid \F_s]\bigr)^2,
$$
where $\E[\delta^2W_{2}^2(t) \mid \F_s]=\delta^2(W^2_{2}(s)+(t-s))$ and  $\bigl(\E[\delta \lvert W_{2}(t) \rvert  \mid \F_s]\bigr)^2$ is given by the square of Eq. \eqref{cond_exp_abs_W}.
\end{proof}
\smallskip

\noindent
Now, we introduce the so-called Skew Geometric Brownian Motion (SGBM) that can be seen as a generalization of the classical Geometric Brownian motion. The SGBM will be used to model the benchmark and the asset prices of the investment universe considered.
\begin{definition}[Skew Geometric Brownian Motion]\label{sGBM}
A stochastic process $S(t)$ is said to be a Skew Arithmetic Brownian Motion (SGBM) if for any $0\leq s<t$ it has the following representation
%satisfies the following stochastic differential equation
%
\begin{equation*}\label{SDE_sgbm}
S(t)=S(s) \exp\biggl(\mu  (t-s)+\sigma (Y(t)-Y(s)) \biggr) \qquad S(s)>0,
\end{equation*}
where $Y(\cdot)$ is a Skew Arithmetic Brownian Motion, $\mu\in \R$ and $\sigma \in \R_+$.
%are bounded and Lipshitz continuous functions.
%
\end{definition}

\noindent
In the next section, exploiting the concepts described above, we will define the tracking portfolio and then the tracking error measure between this portfolio and the benchmark on a skew distributed market.

\subsection{\textbf{Tracking portfolio}}\label{sub:tp}

We assume that the benchmark price $S_B$ and the asset prices $S_i$, with $i\in N= \{1, \ldots, n\}$, of an investment universe follow SGBMs.
%
%Assume that the price of the benchmark index $S_B$ is constructed by means of the $N$ asset prices $S_i$, with $i\in \{1, \ldots, N\}$, available in an investment universe.
%
Therefore, from Definition \ref{sGBM}, for any $t> 0$, $S_B$ and $S_i$ are defined by the following dynamics
\begin{equation}\label{I_Si}
\begin{cases}
S_B(t)=S_B(0) \exp \bigl( \mu_B t+\sigma_B Y_B(t) \bigr) \\
S_{i}(t)=S_{i}(0) \exp \bigl( \mu_{i}t+\sigma_{i}Y_{i}(t) \bigr),
\end{cases}
\end{equation}
where $Y_B(t)$ is a Skew (Arithmetic) Brownian Motion with shape parameter $\beta_B$, which is defined as in Eq. \eqref{sbm_alternative}
\begin{equation*}\label{Y0}
Y_B(t)=\sqrt{1-\delta^2_B} \, W_{1}^{B}(t)+\delta_B \lvert W_{2}^{B}(t) \rvert ,
\end{equation*}
where $\delta_B=\frac{\beta_B}{\sqrt{1+\beta^2_B}}$ as in \eqref{delta},
and $W_{1}^{B}(t)$ and $W_{2}^{B}(t)$ are independent Brownian motions.
Furthermore, we assume that
\begin{equation}\label{Yi1}
Y_{i}(t)=\rho_i Y_B(t)+\sqrt{\biggl( 1-\frac{2 \delta_B^2}{\pi}\biggr)(1-\rho_i^2)} \, W_i(t),
\end{equation}
where $W_i(t)$ is a Brownian motion that is statistically independent from $Y_B(t)$,
and $\rho_i\in (-1,1)$.

\noindent
The stochastic processes $Y_i$ are still SGBMs, i.e., they verify the properties of Definition \ref{SBM}, as shown by the following proposition.
\begin{proposition} \label{prop23}
For any $i \in N$ and $t> 0$, the stochastic processes $Y_{i}$ defined in Eq. \eqref{Yi1} are Skew Arithmetic Brownian Motions with shape parameter
\begin{equation}\label{beta1}
\beta_i=\frac{\beta_B}{\sqrt{1+(1+\beta_B^2)\bigl( 1-\frac{2\delta_B^2}{\pi}\bigr)\bigl(\frac{1}{\rho_i^2}-1\bigr)}},
\end{equation}
where $\rho_i=Corr(Y_B,Y_{i})$ represents the (linear) correlation between $Y_B$ and $Y_{i}$.
\end{proposition}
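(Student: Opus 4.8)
The plan is to verify the three defining properties of Definition 2.9 (labelled `SBM`) directly for the process $Y_i$ given in Eq. \eqref{Yi1}, and then to extract the shape parameter from the one-dimensional marginal of its increments. Properties i) and ii) are immediate: at $t=0$ we have $Y_B(0)=0$ and $W_i(0)=0$, so $Y_i(0)=0$; and $Y_i$ is a fixed linear combination of the continuous processes $Y_B$ and $W_i$, hence has continuous sample paths. The real content is property iii), the distribution of the increments. First I would write, for $t_1<t_2$,
\[
Y_i(t_2)-Y_i(t_1)=\rho_i\bigl(Y_B(t_2)-Y_B(t_1)\bigr)+\sqrt{\Bigl(1-\tfrac{2\delta_B^2}{\pi}\Bigr)(1-\rho_i^2)}\,\bigl(W_i(t_2)-W_i(t_1)\bigr).
\]
By Definition 2.9 applied to $Y_B$, the first term is $\rho_i$ times an $SN(0,t_2-t_1,\beta_B)$ variable; since scaling a skew-normal by a constant $\rho_i$ rescales $\xi$ and $\omega$ but leaves the shape parameter unchanged (up to sign, which for the variance computation is irrelevant), this term is $SN\!\bigl(0,\rho_i^2(t_2-t_1),\beta_B\bigr)$. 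The second term is an independent centred Gaussian with variance $\bigl(1-\tfrac{2\delta_B^2}{\pi}\bigr)(1-\rho_i^2)(t_2-t_1)$. Independence of increments over disjoint intervals for $Y_i$ follows from the independence of the increments of $Y_B$ and of $W_i$ together with the mutual independence of $Y_B$ and $W_i$.

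Next I would invoke Proposition 2.5 (labelled `prop21`): the sum of an $SN(\xi,\omega^2,\beta)$ variable and an independent $N(\mu,\sigma^2)$ variable is $SN\!\bigl(\xi+\mu,\omega^2+\sigma^2,\widetilde\beta\bigr)$ with $\widetilde\beta=\beta/\sqrt{1+(1+\beta^2)\sigma^2/\omega^2}$. Here $\omega^2=\rho_i^2(t_2-t_1)$ and $\sigma^2=\bigl(1-\tfrac{2\delta_B^2}{\pi}\bigr)(1-\rho_i^2)(t_2-t_1)$, so
\[
\frac{\sigma^2}{\omega^2}=\Bigl(1-\tfrac{2\delta_B^2}{\pi}\Bigr)\,\frac{1-\rho_i^2}{\rho_i^2}=\Bigl(1-\tfrac{2\delta_B^2}{\pi}\Bigr)\Bigl(\tfrac{1}{\rho_i^2}-1\Bigr),
\]
which substituted into the formula for $\widetilde\beta$ yields exactly \eqref{beta1}. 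The total variance is $\rho_i^2(t_2-t_1)+\bigl(1-\tfrac{2\delta_B^2}{\pi}\bigr)(1-\rho_i^2)(t_2-t_1)$; one should check this collapses to $(t_2-t_1)$ so that the increment is genuinely $SN(0,t_2-t_1,\beta_i)$ in the normalization of Definition 2.9 — indeed, since $\Var[Y_B(t_2)-Y_B(t_1)]=(1-\tfrac{2\delta_B^2}{\pi})(t_2-t_1)$ by Lemma 2.3 with $\delta=\delta_B$, we get $\rho_i^2\cdot(1-\tfrac{2\delta_B^2}{\pi})(t_2-t_1)+(1-\tfrac{2\delta_B^2}{\pi})(1-\rho_i^2)(t_2-t_1)=(1-\tfrac{2\delta_B^2}{\pi})(t_2-t_1)$, consistent with the scaling convention used for the SABM throughout. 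This establishes that $Y_i$ is a SABM with shape parameter $\beta_i$.

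Finally I would justify the claim that $\rho_i=\mathrm{Corr}(Y_B,Y_i)$. Using $\Cov(Y_B(t),Y_i(t))=\rho_i\Var(Y_B(t))$ (the $W_i$ term drops by independence) and $\Var(Y_i(t))=\rho_i^2\Var(Y_B(t))+(1-\tfrac{2\delta_B^2}{\pi})(1-\rho_i^2)t=\Var(Y_B(t))$ by the variance identity just noted, the correlation is $\rho_i\Var(Y_B(t))/\sqrt{\Var(Y_B(t))^2}=\rho_i$, as asserted. The main obstacle I anticipate is purely bookkeeping: keeping the normalization conventions for the SABM straight — in particular that $\Var[Y(t_2)-Y(t_1)]=(1-\tfrac{2\delta^2}{\pi})(t_2-t_1)$ rather than $(t_2-t_1)$ under the representation \eqref{sbm_alternative} — and making sure the coefficient $\sqrt{(1-2\delta_B^2/\pi)(1-\rho_i^2)}$ in \eqref{Yi1} is precisely what is needed to make both the variance match and the correlation equal $\rho_i$; once the variance identity from Lemma 2.3 is in hand, the rest is a direct application of Proposition 2.5.
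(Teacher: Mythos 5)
Your proposal is correct and follows essentially the same route as the paper: both proofs hinge on applying Proposition \ref{prop21} to the decomposition of $Y_i$ into the scaled skew-normal term $\rho_i Y_B$ plus the independent Gaussian term, and both establish the correlation claim by computing $\Cov(Y_B,Y_i)=\rho_i\Var[Y_B]$ using independence. The only difference is one of detail, not of method — you make explicit the $\sigma^2/\omega^2=(1-\tfrac{2\delta_B^2}{\pi})(\tfrac{1}{\rho_i^2}-1)$ computation that yields \eqref{beta1} and verify the variance bookkeeping, steps the paper compresses into ``it is straightforward to see.''
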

\begin{proof}
Using Proposition \ref{prop21}, it is straightforward to see that the process $Y_{i}$ is still a Skew Arithmetic Brownian Motion with variance
\begin{equation*}\label{VarYi}
\Var[Y_{i}(t)]=\rho_i^2 \Var[Y_B(t)]+\biggl( 1-\frac{2 \delta_B^2}{\pi}\biggr)(1-\rho_i^2)\Var[W_i(t)]=\biggl( 1-\frac{2 \delta_B^2}{\pi}\biggr) t \, ,
\end{equation*}
and shape parameter $\beta_i$ as in Eq. \eqref{beta1}.

\noindent
Furthermore, due to the independence between $Y_B(t)$ and $W_i(t)$,
we have
\begin{eqnarray}
% \nonumber % Remove numbering (before each equation)
  \Cov(Y_{B}(t),Y_{i}(t)) &=& \E[Y_{B}(t) Y_{i}(t)]-\E[Y_B(t)]\E[Y_i(t)] \nonumber \\
   &=& \rho_i \E[Y^2_B(t)]+ \sqrt{\biggl( 1-\frac{2 \delta_B^2}{\pi}\biggr)(1-\rho_i^2)} \E[Y_{B}(t)]  \E[W_{i}(t)]- \rho_i \bigl(\E[Y_B(t)]\bigr)^2 \nonumber \\
   &=& \rho_i \Var[Y_B(t)]=\rho_i\biggl( 1-\frac{2\delta_B^2}{\pi}\biggr) t \, . \nonumber
\end{eqnarray}
Hence, this implies that
$$
Corr(Y_B,Y_{i})=\frac{\Cov(Y_B,Y_i)}{\sqrt{\Var[Y_B]\cdot \Var[Y_i]}}= \rho_i.
$$
\end{proof}

\medskip

\noindent
Using Expressions \eqref{I_Si} for
the benchmark price $S_B$ and the asset prices $S_i$ with $i\in N$,
we can define their log-returns from time $s$ to $t$ ($s \rightarrow t$)
as follows
%
%
%\begingroup\makeatletter\def\f@size{8}\check@mathfonts
\begin{eqnarray}
% \nonumber % Remove numbering (before each equation)
R_B^{s \rightarrow t} &=& \ln\biggl( \frac{S_B(t)}{S_B(s)}\biggr)=\mu_B (t-s) +\sigma_B  (Y_B(t)-Y_B(s)) \nonumber \\
 R_i^{s \rightarrow t}  &=& \ln\biggl( \frac{S_i(t)}{S_{i}(s)} \biggr)=\mu_i (t-s) + \sigma_i  (Y_i(t)-Y_i(s)) \, , \nonumber
\end{eqnarray}
%\endgroup
%
respectively.
% and set the tracking error of portfolio equal to \eqref{trak_por}.
%For convenience and for the sake of notation, we omit the time dependence $(t)$ in the rest of our treatment. Moreover, we denote the increments $Y_i(t)-Y_i(t-1)$ simply by $y_i\sim SN(0,1,\beta_i)$, for any $i\in[0,N]$.
%
Furthermore, denoting by $\Delta Y_i$ and $\Delta Y_B$ the increments $Y_i(t)-Y_i(s)$ $\forall i\in N$ and $Y_B(t)-Y_B(s)$, respectively, where $\Delta Y_i \sim SN(0,t-s,\beta_i)$, $\Delta Y_B \sim SN(0,t-s,\beta_B)$, we can write
%
%\begingroup\makeatletter\def\f@size{8}\check@mathfonts
\begin{eqnarray}
% \nonumber % Remove numbering (before each equation)
R_B &=& \mu_B (t-s) +\sigma_B \Delta Y_B \sim SN(\mu_B  (t-s) ,\sigma_B^2  (t-s) ,\beta_B) \nonumber \\
R_i  &=& \mu_i  (t-s) +\sigma_i \Delta Y_i \sim SN(\mu_i (t-s) ,\sigma_i^2 (t-s),\beta_i) \nonumber
\end{eqnarray}
%\endgroup
where $\beta_i$ is as in \eqref{beta1}.

We now indicate by $w=(w_1, \ldots, w_n)$ the vector of portfolio weights, that are the decision variables of the problems addressed in this paper, for which the full investment and no shortselling constraints hold ($\sum_{i=1}^n w_i=1$ and $w_i \geq 0$, respectively).
Furthermore, denoting by $R_P(w)$ the random portfolio return and considering a common assumption in finance that the portfolio return can be expressed as a linear weighted sum of individual stock returns \citep[see, e.g.,][]{canakgoz2009mixed}, we have
\begin{equation*}\label{trak_por}
R_P(w)=\sum_{i=1}^n w_i R_i \, .
\end{equation*}
Since we are interested in examining the difference between the portfolio return and the benchmark return, we can write
\begin{equation}\label{R}
R_B-R_P(w)=  m(w)(t-s)+ \Delta Y(w) \, ,
\end{equation}
where
\begin{equation*}
m(w)=\mu_B - \sum_{i=1}^n w_i \mu_i
\end{equation*}
and
\begingroup\makeatletter\def\f@size{12}\check@mathfonts
\begin{eqnarray}
% \nonumber % Remove numbering (before each equation)
 \Delta Y(w)  &=& \sigma_B \Delta Y_B -\sum_{i=1}^n w_i \sigma_i \Delta Y_i \nonumber \\
   &=& \biggl(\sigma_B-\sum_{i=1}^n w_i \sigma_i \rho_i \biggr) \Delta Y_B - \sum_{i=1}^n w_i \sigma_i \sqrt{\biggl( 1-\frac{2 \delta_B^2}{\pi} \biggr)(1-\rho_i^2)} \, \Delta W_i \label{Y_new2}
\end{eqnarray}
\endgroup
Then, for any $0 \leq s < t$ we define the \emph{ex-post} tracking error as
\begin{equation}\label{TE_p}
TE^{\,(post)}_{s}(R_B^{s \rightarrow t}-R_P^{s \rightarrow t}(w))=\sqrt{\E[ \left( R_B^{s \rightarrow t}-R_P^{s \rightarrow t}(w) \right)^2]},
\end{equation}
while, for an out-of-sample analysis, we introduce the \emph{ex-ante} tracking error computed in a future time $t > s$ as
 \begin{equation}\label{TE_a}
TE^{\,(ante)}_{s}(R_B^{s \rightarrow t}-R_P^{s \rightarrow t}(w))=\sqrt{\E[ \left( R_B^{s \rightarrow t}-R_P^{s \rightarrow t}(w) \right)^2 \mid \F_s]} \, ,
\end{equation}
where $\F_{s}$ is the filtration at a fixed time $s$.

\noindent
As shown in the following, we will select the tracking portfolio focusing on the \emph{ex-post} tracking error as in \eqref{TE_p}, and we will evaluate its out-of-sample performance by means of the \emph{ex-ante} tracking error as in \eqref{TE_a}.

\subsubsection{Features of the tracking portfolio}\label{Sub_results}

%\noindent
Assuming that the benchmark price $S_B$ and the asset prices $S_i$, with $i\in N= \{1, \ldots, n\}$,
are described by Eq. \eqref{I_Si}, in the following theorem we give explicit expressions for
the \emph{ex-post} tracking error \eqref{TE_p}, for the \emph{ex-ante} tracking error \eqref{TE_a},
and for the forecasted tracking portfolio defined as
$R^{F}_{s}(w) = \E[R_{P}(w,t) \mid \F_s]$, where $0 \leq s < t$.
Furthermore, since in Section \ref{Sec:NumSim} we will provide empirical analysis on a real-world weekly data, and we will compute one week ahead \emph{ex-ante} tracking error and \emph{ex-ante} replicating portfolio returns. Thus, we set $s=t-1$.

\begin{theorem}\label{prop31}
Consider $0 \leq t-1 < t$ and the difference between the benchmark index and the portfolio
returns as in \eqref{R}.
Then, the following results hold.
\begin{itemize}
\item[i)] The \emph{ex-post} tracking error \eqref{TE_p} is given by
\begingroup\makeatletter\def\f@size{12}\check@mathfonts
\begin{eqnarray}
    TE^{\,(post)}_{t-1}(w) &=& \biggl( m^2(w) + 2 m(w) \biggl(\sigma_B-\sum_{i=1}^n w_i \sigma_i \rho_i \biggr) \delta_B \sqrt{\frac{2}{\pi}}  \nonumber \\
   &+&   \biggl(\sigma_B-\sum_{i=1}^n w_i \sigma_i \rho_i \biggr)^2  + \sum_{i=1}^n w_i^2 \sigma_i^2 \biggl( 1-\frac{2 \delta_B^2}{\pi} \biggr)(1-\rho_i^2) \biggr)^{\frac{1}{2}} \label{Eq:ExPostTE}
\end{eqnarray}
\endgroup
\item[ii)] The \emph{ex-ante} tracking error \eqref{TE_a} reads
\begingroup\makeatletter\def\f@size{11}\check@mathfonts
\begin{eqnarray}
% \nonumber % Remove numbering (before each equation)
 TE^{\,(ante)}_{t-1}(w)  &=& \biggl( m^2(w)+2 m(w)  \biggl(\sigma_B-\sum_{i=1}^n w_i \sigma_i \rho_i \biggr) \biggl[  2u_2^B(t-1)\biggl(\Phi\biggl( \frac{u_2^B(t-1)}{\delta_B }\biggr)-1\biggr) \nonumber \\
  &+&  \delta_B \sqrt{\frac{2 }{\pi}}  \phi\biggl( \frac{u_2^B(t-1)}{\delta_B}\biggr)\biggr] + \biggl(\sigma_B - \sum_{i=1}^n w_i \sigma_i \rho_i \biggr)^2 \nonumber \\
   &\cdot&  \biggl\{ 1+u_2^2(t-1)-\biggl( u_2(t-1)\biggl(2\Phi\biggl( \frac{u_2(t-1)}{\delta_B}\biggr)-1\biggr) + \delta_B \sqrt{\frac{2}{\pi}}\cdot\phi\biggl( \frac{u_2(t-1)}{\delta_B}\biggr)\biggr)^2\biggr\}
   \nonumber \\
     &+& \sum_{i=1}^n w^2_i\sigma^2_i \biggl( 1-\frac{2 \delta^{2}_B}{\pi}\biggr)(1-\rho^2_i) + \biggl\{\biggl(\sigma_B -\sum_{i=1}^n w_i \sigma_i \rho_i \biggr)    \nonumber \\
     &\cdot& \biggl[  2u_2^B(t-1)\biggl(\Phi\biggl( \frac{u_2^B(t-1)}{\delta_B }\biggr)-1\biggr) + \delta_B \sqrt{\frac{2 }{\pi}} \cdot \phi\biggl( \frac{u_2^B(t-1)}{\delta_B}\biggr)\biggr]  \biggr\}^2 \biggr)^{\frac{1}{2}} \label{VarRts}
\end{eqnarray}
\endgroup
where $u_{2}^{B}(t-1)$ is a realization of
$\delta_B  \lvert W_{2}^{B}(t-1) \rvert$.
\item[iii)] The forecasted tracking portfolio return $R^{F}(w)$ is given by
\begingroup\makeatletter\def\f@size{11}\check@mathfonts
\begin{eqnarray}
% \nonumber % Remove numbering (before each equation)
  R^{F}(w) &=& \E[R_{P}(w) \mid \F_{t-1}] \nonumber \\
    &=& \sum_{i=1}^n w_i \mu_i + \biggl(\sum_{i=1}^n w_i \sigma_i \rho_i \biggr) \biggl[2u_2^B(t-1)\biggl(\Phi\biggl( \frac{u_2^B(t-1)}{\delta }\biggr)-1\biggr)+ \delta_B \sqrt{\frac{2 }{\pi}}  \phi\biggl( \frac{u_2^B(t-1)}{\delta_B}\biggr) \biggr] ,\nonumber \\
   %&+& \delta_B \sqrt{\frac{2(t-s)}{\pi}} \phi\biggl( \frac{u_{2}^{B}(s)}{\delta_B \sqrt{t-s}}\biggr) \biggr]-\sum_{i=1}^n w_i \sigma_i
   %\sqrt{\biggl( 1-\frac{2 \delta_B^2}{\pi} \biggr)(1-\rho_i^2)} \,
   %u_i(s) \, ,
   \label{ERts}
\end{eqnarray}
\endgroup
where $u_{2}^{B}(t-1)$ is a realization of
$\delta_B  \lvert W_{2}^{B}(t-1) \rvert$.

%are realizations of
%$\sqrt{1-\delta^2_B} \, W_{1}^{B}(s)$, $\delta_B \lvert W_{2}^{B}(s) \rvert$, and
%$\sqrt{\biggl( 1-\frac{2\delta_B^2}{\pi} \biggr)(1-\rho_i^2)} \, W_i(s)$,
%respectively.
%
\end{itemize}
\end{theorem}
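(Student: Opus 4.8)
The plan is to derive all three items from a single algebraic identity: expanding the square $\bigl(R_B-R_P(w)\bigr)^2=\bigl(m(w)+\Delta Y(w)\bigr)^2$ from \eqref{R}, specialized to $t-s=1$, and substituting the decomposition of $\Delta Y(w)$ recorded in \eqref{Y_new2}, namely $\Delta Y(w)=a(w)\,\Delta Y_B-\sum_{i=1}^n w_i\sigma_i\sqrt{(1-2\delta_B^2/\pi)(1-\rho_i^2)}\,\Delta W_i$ with $a(w):=\sigma_B-\sum_{i=1}^n w_i\sigma_i\rho_i$. The structural facts that drive every computation are: (a) $\Delta Y_B$ is independent of each $\Delta W_i$ because $W_i$ is independent of $Y_B$; (b) the $\Delta W_i$ are mutually independent, mean-zero, with $\E[\Delta W_i^2]=t-s=1$; and (c) for the conditional statements, $\Delta W_i$ is independent of $\F_{t-1}$, while the conditional law of $\Delta Y_B$ given $\F_{t-1}$ is controlled by Theorem~\ref{prop22} through the $\F_{t-1}$-measurable realizations $u_1^B(t-1)$ and $u_2^B(t-1)$ of $\sqrt{1-\delta_B^2}\,W_1^B(t-1)$ and $\delta_B\lvert W_2^B(t-1)\rvert$.

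For item~(i) I would take the plain expectation of $m^2(w)+2m(w)\Delta Y(w)+\Delta Y(w)^2$. Since $\Delta Y_B\sim SN(0,1,\beta_B)$, Lemma~\ref{prop:moment-generating} gives $\E[\Delta Y_B]=\delta_B\sqrt{2/\pi}$ and $\E[\Delta Y_B^2]=\Var[\Delta Y_B]+(\E[\Delta Y_B])^2=1$; moreover $\E[\Delta W_i]=0$, $\E[\Delta W_i^2]=1$, and the cross terms $\E[\Delta Y_B\,\Delta W_i]$ and $\E[\Delta W_i\,\Delta W_j]$ ($i\neq j$) vanish by independence. Feeding these values into the expansion and taking the square root produces exactly \eqref{Eq:ExPostTE}.

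For item~(iii), write $R^F(w)=\sum_{i=1}^n w_i\bigl(\mu_i+\sigma_i\,\E[\Delta Y_i\mid\F_{t-1}]\bigr)$ and use the representation \eqref{Yi1}, $\Delta Y_i=\rho_i\,\Delta Y_B+\sqrt{(1-2\delta_B^2/\pi)(1-\rho_i^2)}\,\Delta W_i$, so that $\E[\Delta Y_i\mid\F_{t-1}]=\rho_i\,\E[\Delta Y_B\mid\F_{t-1}]$ (the $\Delta W_i$ piece has zero conditional mean). It then remains to compute $\E[\Delta Y_B\mid\F_{t-1}]=\E[Y_B(t)\mid\F_{t-1}]-Y_B(t-1)$: taking $s=t-1$ in \eqref{EYts}, using $Y_B(t-1)=u_1^B(t-1)+u_2^B(t-1)$, and simplifying yields $\E[\Delta Y_B\mid\F_{t-1}]=2u_2^B(t-1)\bigl(\Phi(u_2^B(t-1)/\delta_B)-1\bigr)+\delta_B\sqrt{2/\pi}\,\phi(u_2^B(t-1)/\delta_B)$, which substituted above gives \eqref{ERts}.

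Item~(ii) is the one that requires care. Conditioning on $\F_{t-1}$, the independence structure reduces the expansion to $\E[(m(w)+\Delta Y(w))^2\mid\F_{t-1}]=m^2(w)+2m(w)a(w)\,\E[\Delta Y_B\mid\F_{t-1}]+a(w)^2\,\E[\Delta Y_B^2\mid\F_{t-1}]+\sum_{i=1}^n w_i^2\sigma_i^2(1-2\delta_B^2/\pi)(1-\rho_i^2)$, the $\Delta W_i$ contributing only their unit conditional second moments and all conditional cross-covariances dropping out. The first-order term uses the conditional mean already found in item~(iii); for the second-order term I would split $\E[\Delta Y_B^2\mid\F_{t-1}]=\Var[Y_B(t)\mid\F_{t-1}]+\bigl(\E[\Delta Y_B\mid\F_{t-1}]\bigr)^2$ and read the variance off \eqref{var_oldts} with $s=t-1$. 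This splitting is precisely why the claimed formula \eqref{VarRts} simultaneously carries the curly-brace block $1+u_2^2(t-1)-(\cdots)^2$ (the conditional variance of $Y_B(t)$) and a separate trailing squared bracket equal to $\bigl(a(w)\,\E[\Delta Y_B\mid\F_{t-1}]\bigr)^2$. Collecting the four contributions and taking the square root gives \eqref{VarRts}. The main obstacle is organizational rather than conceptual: faithfully converting Theorem~\ref{prop22}'s formulas for $Y_B(t)\mid\F_s$ into statements about the increment $\Delta Y_B$, tracking which quantities are $\F_{t-1}$-measurable realizations ($u_1^B,u_2^B$) and which remain random, and verifying that every cross term — between $\Delta Y_B$ and the $\Delta W_i$, and among the $\Delta W_i$ — vanishes, which hinges on the mutual independence of the idiosyncratic Brownian motions and their independence from the two benchmark drivers $W_1^B,W_2^B$.
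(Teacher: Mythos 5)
Your proposal is correct and follows essentially the same route as the paper: expand $(m(w)+\Delta Y(w))^2$, use the independence of $\Delta Y_B$ from the $\Delta W_i$ to kill all cross terms, take unconditional moments of $\Delta Y_B\sim SN(0,1,\beta_B)$ for item (i), and for items (ii)--(iii) reduce everything to $\E[\Delta Y_B\mid\F_{t-1}]$ and $\Var[\Delta Y_B\mid\F_{t-1}]=\Var[Y_B(t)\mid\F_{t-1}]$ read off from Theorem~\ref{prop22} with $s=t-1$. The only cosmetic difference is that you compute $\E[\Delta Y_B^2]=1$ directly where the paper adds $\Var[\Delta Y(w)]$ and $(\E[\Delta Y(w)])^2$ separately; the content is identical.
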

\begin{proof}
\hfill \break
\begin{itemize}
\item[i)] We first observe that
\begingroup\makeatletter\def\f@size{11}\check@mathfonts
\begin{eqnarray}
% \nonumber % Remove numbering (before each equation)
  \E[ \left( R_B^{t-1\to t}-R^{t-1\to t}_P(w) \right)^2] &=& \E[ \left( m(w)+ \Delta Y(w) \right)^2] \nonumber \\
   &=&  m^2(w) + 2 m(w) \E[ \Delta Y(w)] + \Var[\Delta Y(w)] + (\E[ \Delta Y(w)])^2 \nonumber \\
   \label{proof:tr}
\end{eqnarray}
\endgroup
where $\Delta Y(w)$ is as in \eqref{Y_new2}. %%

\noindent
Since $\Delta Y_B$ and $\Delta W_i$ are independent random variables, we have
\begingroup\makeatletter\def\f@size{11}\check@mathfonts
\begin{eqnarray}
% \nonumber % Remove numbering (before each equation)
  \Var[\Delta Y(w)] &=&  \biggl(\sigma_B-\sum_{i=1}^n w_i \sigma_i \rho_i \biggr)^2 \Var[ \Delta Y_B] + \sum_{i=1}^n w_i^2 \sigma_i^2 \biggl( 1-\frac{2 \delta_B^2}{\pi} \biggr)(1-\rho_i^2) \, \Var[\Delta W_i] \nonumber \\
   &=&  \biggl(\sigma_B-\sum_{i=1}^n w_i \sigma_i \rho_i \biggr)^2 \biggl( 1-\frac{2 \delta_B^2}{\pi} \biggr) + \sum_{i=1}^n w_i^2 \sigma_i^2 \biggl( 1-\frac{2 \delta_B^2}{\pi} \biggr)(1-\rho_i^2) \, , \label{proof:Var_DY}
\end{eqnarray}
\endgroup
where from Definition \ref{SBM} and Lemma  \ref{prop:moment-generating} we know that
$\Var[\Delta Y_B]=\bigl( 1-\frac{2 \delta_B^2}{\pi} \bigr)$.
Furthermore,
\begin{equation} \label{proof:E_DY}
    \E[\Delta Y(w)] = \biggl(\sigma_B-\sum_{i=1}^n w_i \sigma_i \rho_i \biggr) \delta_B \sqrt{\frac{2}{\pi}}.
\end{equation}
Therefore, substituting \eqref{proof:Var_DY} and \eqref{proof:E_DY} in \eqref{proof:tr}, we obtain Expression \eqref{Eq:ExPostTE}.
\item[ii)] Using Eq. \eqref{R},
we can write
\begingroup\makeatletter\def\f@size{11}\check@mathfonts
\begin{equation} \label{proof:TE_ante}
\E\biggl[ \left( R_B^{t-1\to t}-R^{t-1\to t}_P(w) \right)^2 \mid \F_{t-1}\biggr]= m^2(w)+2 m(w) \E[\Delta Y(w) \mid \F_{t-1}]+\E[\Delta Y^2(w) \mid \F_{t-1}].
\end{equation}
\endgroup
Furthermore,  from \eqref{Y_new2} we have
\begingroup\makeatletter\def\f@size{12}\check@mathfonts
\begin{eqnarray}
% \nonumber % Remove numbering (before each equation)
 \E[\Delta Y(w) \mid \F_{t-1}]  &=& \biggl( \sigma_B - \sum_{i=1}^n w_i \sigma_i \rho_i \biggr)\E[\Delta Y_B \mid \F_{t-1}] \nonumber \\
   &-&  \sum_{i=1}^n w_i \sigma_i \sqrt{\biggl(1-\frac{2\delta^2_B}{\pi} \biggr)(1-\rho^2_i)} \, \E[\Delta W_i \mid \F_{t-1}] \, , \nonumber
\end{eqnarray}
\endgroup
and
\begingroup\makeatletter\def\f@size{12}\check@mathfonts
$$
\E[\Delta Y^2(w) \mid \F_{t-1}]=\Var\![\Delta Y(w) \mid \F_{t-1}]+\bigl( \E[\Delta Y(w) \mid \F_{t-1}]\bigr)^2,
$$
\endgroup
with
\begingroup\makeatletter\def\f@size{11}\check@mathfonts
\begin{eqnarray}
% \nonumber % Remove numbering (before each equation)
 \Var[\Delta Y(w) \mid \F_{t-1}]  &=& \biggl(\sigma_B - \sum_{i=1}^n  w_i \sigma_i \rho_i \biggr)^{\!\!2} \Var[\Delta Y_B \mid \F_{t-1}] \nonumber \\
   &+& \sum_{i=1}^n w^2_i \sigma^2_i \biggl( 1-\frac{2\delta^2_B}{\pi}\biggr)(1-\rho^2_i) \Var\![\Delta W_i \mid \F_{t-1}] \, . \nonumber
\end{eqnarray}
%
%$$
%\Var[\Delta Y(w) \mid \F_{t-1}]= \biggl(\sigma_B - \sum_{i=1}^n  w_i \sigma_i \rho_i \biggr)^{\!\!2} \Var[\Delta Y_B \mid \F_{t-1}]+
%\sum_{i=1}^n w^2_i \sigma^2_i \biggl( 1-\frac{2\delta^2_B}{\pi}\biggr)(1-\rho^2_i) \Var\![\Delta W_i \mid \F_{t-1}].
%$$
\endgroup
%with $\E[\varepsilon_i \mid \F_s]=u_i$, $\Var\!(\varepsilon_i \mid \F_{s})=\tau-s$.
%
%
From the properties of the Brownian motion, it is clear that
$$
\E[\Delta W_i \mid \F_{t-1}]=\E[\Delta W_i]=0, \quad
\Var[\Delta W_i \mid \F_{t-1}]=\Var[\Delta W_i]=1.
$$
Exploiting Theorem \ref{prop22}, we can evaluate $\E[\Delta Y_B \mid \F_{t-1}]$ and $\Var\![\Delta Y_B\!\mid\!\F_{t-1}]$ as follows
\begingroup\makeatletter\def\f@size{11}\check@mathfonts
\begin{eqnarray}
% \nonumber % Remove numbering (before each equation)
  \E[\Delta Y_B \mid \F_{t-1}] &=& \E[Y_B \mid \F_{t-1}]-(u^B_1(t-1)+u^B_2(t-1)) \nonumber \\
   &=& u_1^B(t-1)-u_2^B(t-1)+2 u_2^B(t-1)\Phi\biggl( \frac{u_2^B(t-1)}{\delta_B }\biggr) \nonumber \\
   &+&  \delta_B \sqrt{\frac{2 }{\pi}} \cdot \phi\biggl( \frac{u_2^B(t-1)}{\delta_B}\biggr)-(u^B_1(t-1)+u^B_2(t-1)) \nonumber \\
   &=& 2u_2^B(t-1)\biggl(\Phi\biggl( \frac{u_2^B(t-1)}{\delta_B }\biggr)-1\biggr)+ \delta_B \sqrt{\frac{2 }{\pi}} \cdot \phi\biggl( \frac{u_2^B(t-1)}{\delta_B}\biggr) \, ; \nonumber \\
    \label{EDYB}
\end{eqnarray}
\endgroup
\begingroup\makeatletter\def\f@size{11}\check@mathfonts
\begin{eqnarray}
% \nonumber % Remove numbering (before each equation)
 \Var[\Delta Y_B \mid \F_{t-1}]  &=& \Var[Y_B(t) \mid \F_{t-1}] + \Var[Y_B(t-1) \mid \F_{t-1}] \nonumber \\
   &=& \Var[Y_B(t) \mid \F_{t-1}] \nonumber \\
   &=& 1+u_2^2(t-1)-\biggl( u_2(t-1)\biggl(2\Phi\biggl( \frac{u_2(t-1)}{\delta_B}\biggr)-1\biggr) \nonumber \\
     &+& \delta_B\sqrt{\frac{2}{\pi}} \phi\biggl( \frac{u_2(t-1)}{\delta_B}\biggr)\biggr)^2 \, , \label{proof:VarF}
\end{eqnarray}
\endgroup
where $u_{1}^{B}(t-1)$ and $u_{2}^{B}(t-1)$
are realizations of
$\sqrt{1-\delta^2_B} \, W_{1}^{B}(t-1)$ and $\delta_B  \lvert W_{2}^{B}(t-1) \rvert$, respectively.
Therefore, substituting \eqref{EDYB} and \eqref{proof:VarF} in \eqref{proof:TE_ante}, we obtain Expression \eqref{VarRts}.

\item[iii)] The forecasted tracking portfolio return $R^F(w)$ %, at time $t $
can be obtained as follows
\begingroup\makeatletter\def\f@size{11}\check@mathfonts
\begin{eqnarray}
% \nonumber % Remove numbering (before each equation)
 R^F(w) &=& \E[R^{t-1 \to t}_P(w) \mid \F_{t-1}] \nonumber \\
 &=& \sum_{i=1}^n w_i \mu_i + \biggl(\sum_{i=1}^n  w_i \sigma_i \rho_i \biggr) \E[\Delta Y_B \mid \F_{t-1}] \nonumber \\
   &+& \sum_{i=1}^n  w_i \sigma_i \sqrt{\biggl( 1-\frac{2 \delta^2_B}{\pi}\biggr)(1-\rho_i)} \, \E[\Delta W_i \mid \F_{t-1}], \nonumber
\end{eqnarray}
\endgroup
where $\E[\Delta W_i \mid \F_{t-1}]=0$ and $\E[\Delta Y_B \mid \F_{t-1}]$ is as in \eqref{EDYB}. %\ref{prop22}.
\end{itemize}
\end{proof}

\begin{corollary}\label{hPCA_normal}
If $\beta_B$ is equal to zero, our framework reduces to the particular case of normal distributions.
Indeed, from \eqref{delta} one has $\delta_B=0$, and consequently,
$$
Y_B= W_1^{B}, \qquad Y_i = \rho_i Y_B + \sqrt{(1-\rho^2_i)} \, W_i \qquad \forall i \in\{1,...,n\} \, .
$$
Observe that, in this case, $Y_B$ and $Y_i$ are Brownian motions with correlation $Corr(Y_B,Y_i)=\rho_i$.
Then, for any $0\leq t-1 < t$, the \emph{ex-post} tracking error, the \emph{ex-ante} tracking error, and the forecasted index returns (provided by Theorem \ref{prop31}) reduce to
\begingroup\makeatletter\def\f@size{12}\check@mathfonts
\begin{equation} \label{Eq:ExPostTE_Normal}
    TE^{\,(post)}_{t-1}(w)= \biggl( m^2(w) + \biggl(\sigma_B-\sum_{i=1}^n w_i \sigma_i \rho_i \biggr)^2  + \sum_{i=1}^n w_i^2 \sigma_i^2 (1-\rho_i^2) \biggr)^{\frac{1}{2}}
\end{equation}
\endgroup
\begingroup\makeatletter\def\f@size{12}\check@mathfonts
\begin{eqnarray}
% \nonumber % Remove numbering (before each equation)
  TE_{t-1}^{\,(ante)} &=& \biggl(m^2(w)+
   \biggl(\sigma_B - \sum_{i=1}^n  w_i\sigma_i \rho_i \biggr)^2
   + \sum_{i=1}^n  w^2_i \sigma^2_i (1-\rho^2_i)\biggr)^{\frac{1}{2}}
    \label{TEante_norm}
\end{eqnarray}
\endgroup
and
\begin{equation}\label{prev_norm}
    R^F(w) = \sum_{i=1}^n w_i \mu_i
\end{equation}
respectively.
\end{corollary}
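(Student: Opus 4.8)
The plan is to obtain Corollary~\ref{hPCA_normal} purely as the $\delta_B=0$ specialisation of Theorem~\ref{prop31}, using that $\beta_B=0$ forces $\delta_B=0$ through \eqref{delta}.

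First I would record the structural consequences of $\delta_B=0$. Substituting $\delta_B=0$ into \eqref{sbm_alternative} gives $Y_B=W_1^B$, a standard Brownian motion; substituting it into \eqref{Yi1} gives $Y_i=\rho_i Y_B+\sqrt{1-\rho_i^2}\,W_i$, and \eqref{beta1} then yields $\beta_i=0$, so each $Y_i$ is a Brownian motion as well, with $Corr(Y_B,Y_i)=\rho_i$ by Proposition~\ref{prop23} (equivalently, by the covariance computation in its proof). This justifies the two displayed representations in the statement. For part~i), I would then substitute $\delta_B=0$ directly into \eqref{Eq:ExPostTE}: the cross term $2m(w)\bigl(\sigma_B-\sum_i w_i\sigma_i\rho_i\bigr)\delta_B\sqrt{2/\pi}$ vanishes and $1-\tfrac{2\delta_B^2}{\pi}=1$, leaving exactly \eqref{Eq:ExPostTE_Normal}. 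For part~iii), I would return to the intermediate identity in the proof of Theorem~\ref{prop31}(iii), namely $R^F(w)=\sum_i w_i\mu_i+\bigl(\sum_i w_i\sigma_i\rho_i\bigr)\E[\Delta Y_B\mid\F_{t-1}]+\sum_i w_i\sigma_i\sqrt{(1-\tfrac{2\delta_B^2}{\pi})(1-\rho_i)}\,\E[\Delta W_i\mid\F_{t-1}]$, and observe that with $\delta_B=0$ the increment $\Delta Y_B=W_1^B(t)-W_1^B(t-1)$ is independent of $\F_{t-1}$, so $\E[\Delta Y_B\mid\F_{t-1}]=0=\E[\Delta W_i\mid\F_{t-1}]$; this collapses $R^F(w)$ to $\sum_i w_i\mu_i$, i.e.\ \eqref{prev_norm}.

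Part~ii) is the only place requiring a little care, and it is where I expect the (very mild) obstacle. A blind substitution $\delta_B=0$ into \eqref{VarRts} meets the apparently indeterminate factors $\Phi\bigl(u_2^B(t-1)/\delta_B\bigr)$ and $\phi\bigl(u_2^B(t-1)/\delta_B\bigr)$. The resolution is that $u_2^B(t-1)=\delta_B\lvert W_2^B(t-1)\rvert$, so the arguments of $\Phi$ and $\phi$ equal the finite quantity $\lvert W_2^B(t-1)\rvert$ while every prefactor carrying $u_2^B(t-1)$ or $\delta_B$ tends to $0$; hence all conditional correction terms in \eqref{VarRts} vanish and only $m^2(w)+\bigl(\sigma_B-\sum_i w_i\sigma_i\rho_i\bigr)^2+\sum_i w_i^2\sigma_i^2(1-\rho_i^2)$ survives, which is \eqref{TEante_norm}.

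Cleaner still, I would bypass the closed form altogether: from the identity $\E[(R_B^{t-1\to t}-R_P^{t-1\to t}(w))^2\mid\F_{t-1}]=m^2(w)+2m(w)\E[\Delta Y(w)\mid\F_{t-1}]+\E[\Delta Y^2(w)\mid\F_{t-1}]$ in the proof of Theorem~\ref{prop31}(ii), use that for $\delta_B=0$ both $\Delta Y_B$ and $\Delta W_i$ are $\F_{t-1}$-independent with mean $0$ and variance $1$, whence $\E[\Delta Y(w)\mid\F_{t-1}]=0$ and $\Var[\Delta Y(w)\mid\F_{t-1}]=\bigl(\sigma_B-\sum_i w_i\sigma_i\rho_i\bigr)^2+\sum_i w_i^2\sigma_i^2(1-\rho_i^2)$, giving \eqref{TEante_norm} immediately; in particular the \emph{ex-ante} and \emph{ex-post} tracking errors coincide in the Gaussian case, which serves as the natural sanity check.
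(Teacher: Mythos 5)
Your proposal is correct and follows exactly the route the paper intends: Corollary \ref{hPCA_normal} is the $\delta_B=0$ specialisation of Theorem \ref{prop31} (the paper supplies no separate proof, treating the reduction as immediate). Your explicit resolution of the apparent $0/0$ in $\Phi\bigl(u_2^B(t-1)/\delta_B\bigr)$ via $u_2^B(t-1)=\delta_B\lvert W_2^B(t-1)\rvert$, and the alternative derivation of \eqref{TEante_norm} directly from the conditional-moment identity (showing the \emph{ex-ante} and \emph{ex-post} errors coincide in the Gaussian case), are sound details that the paper leaves implicit.
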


In the next section, we propose the \emph{hybrid} PCA strategy,
where we consider both the cases of normal and skew distributions.

\section{Portfolio selection models for index tracking}\label{Sec:algo}

In this section, we describe two approaches to portfolio selection aimed at replicating a given benchmark:
the \emph{hybrid} PCA strategy, where we consider both normal and skew distributed markets (see Section \ref{hPCA}), and the basic index tracking strategy, where we consider two variants, a standard one typically used in the literature \citep[see, e.g.,][]{scozzari2013exact}, and one used by some practitioners
 (see Section \ref{Sec:BaseMod}).

\subsection{\textbf{Hybrid PCA strategy}}\label{hPCA}

As mentioned in the introduction, the Index Tracking (IT) strategy consists of selecting a small number of assets that replicate a certain benchmark as closely as possible.
To this end, we propose here a novel procedure to tackle the IT problem, called hybrid Principal Component Analysis (hPCA) that we  apply both to normal and skew distributions.

More precisely, we perform a PCA for each pair of random variables $R_B$ and $R_i$ with $i=1, \ldots n$,
thus obtaining the following decomposition
\begin{eqnarray}
% \nonumber % Remove numbering (before each equation)
  R_{B} &=& \alpha_B + \gamma_{11}^{i} Z_{1}^{i}  + \gamma_{12}^{i}  Z_{2}^{i}  \label{eq:R_B} \\
  R_{i} &=& \alpha_i + \gamma_{21}^{i} Z_{1}^{i} + \gamma_{22}^{i} Z_{2}^{i} \, . \label{eq:r_i}
\end{eqnarray}
 In the case of normal markets, we have
 \begin{align}\label{eq:ParNormal}
 \alpha_B  &= \E[R_{B}]         &  \gamma_{11}^{i} &=e_{11}^{i} \sqrt{\lambda_{1}^{i} }              &  \gamma_{12}^{i} &=e_{12}^{i}  \sqrt{\lambda_{2}^{i} } \\
 \alpha_i  &= \E[R_{i}]         &  \gamma_{21}^{i} &= e_{21}^{i} \sqrt{\lambda_{1}^{i}}   &  \gamma_{22}^{i} &=e_{22}^{i} \sqrt{\lambda_{2}^{i}} \\
 Z_{1}^{i} &  \sim N(0,1)                &  Z_{2}^{i}      &  \sim N(0,1)          &  % \nonumber
 \end{align}
 where $Z_{1}^{i}$ and $Z_{2}^{i}$ are independent and identically distributed (i.i.d.) standard normal random variables;
 the vectors $e_{1}^{i} = (e_{11}^{i}, e_{21}^{i})^{T}$ and $e_{2}^{i} = (e_{12}^{i}, e_{22}^{i})^{T}$ are the eigenvectors of the covariance matrix $\Sigma_{i}$ (as in \eqref{Sigma1}) obtained by $R_B$ and $R_i$ that identify the directions of the $1^{st}$ and the $2^{nd}$ principal components;
 $\lambda_{1}^{i}$ and $\lambda_{2}^{i}$ (see \eqref{lam11} and \eqref{lam12}, respectively) are the eigenvalues of $\Sigma_{i}$.

 \noindent
 In the case of skewed markets, for the benchmark-asset principal
component factorization \eqref{eq:R_B}-\eqref{eq:r_i} we have
 \begingroup\makeatletter\def\f@size{12}\check@mathfonts
 \begin{align}\label{eq:ParSkew}
 \alpha_B  &= \E[R_{B}] - \sqrt{\frac{2}{\pi}} \bigl( \gamma_{11}^{i} \delta_{1}^{i}  + \gamma_{12}^{i}   \delta_{2}^{i}  \bigr)      &  \gamma_{11}^{i} &=e_{11}^{i} \sqrt{\widetilde{\lambda}_{1}^{i} }              &  \gamma_{12}^{i} &=e_{12}^{i}  \sqrt{\widetilde{\lambda}_{2}^{i} } \\
 \alpha_i  &= \E[R_{i}] - \sqrt{\frac{2}{\pi}} \bigl( \gamma_{21}^{i} \delta_{1}^{i}  + \gamma_{22}^{i}   \delta_{2}^{i}  \bigr)        &  \gamma_{21}^{i} &= e_{21}^{i} \sqrt{\widetilde{\lambda}_{1}^{i}}   &  \gamma_{22}^{i} &=e_{22}^{i} \sqrt{\widetilde{\lambda}_{2}^{i}}  \\
 Z_{1}^{i} &  \sim SN(0,1, \beta_{1}^{i})                &  Z_{2}^{i}      &  \sim SN(0,1, \beta_{2}^{i})          & % \nonumber
 \end{align}
 \endgroup
 where $Z_{1}^{i}$ and $Z_{2}^{i}$ are i.i.d. standard skew-normal random variables;
 $\delta_{1}^{i}=\frac{\beta_{1}^{i}}{\sqrt{1+(\beta_{1}^{i})^2}}$ and $\delta_{2}^{i}=\frac{\beta_{2}^{i}}{\sqrt{1+(\beta_{2}^{i})^2}}$;
 the vectors $e_{1}^{i} = (e_{11}^{i}, e_{21}^{i})^{T}$ and $e_{2}^{i} = (e_{12}^{i}, e_{22}^{i})^{T}$ are the eigenvectors of the covariance matrix $\widetilde{\Sigma}_i$ (as in \eqref{Sigma2}), that identify the directions of the $1^{st}$ and the $2^{nd}$ principal components and are the same of the normal case;
 $\widetilde{\lambda}_{1}^{i}$ and $\widetilde{\lambda}_{2}^{i}$ (see \eqref{app:eigenV_Skew}) are the eigenvalues of $\widetilde{\Sigma}_i$.
%$\delta=\frac{\beta}{\sqrt{1+\beta^2}}$
%
%\begin{eqnarray}
% \nonumber % Remove numbering (before each equation)
%  R_{B} &=& \alpha_B + e_{11}^{i} \sqrt{\lambda_{1}^{i} } Z_{1}^{i}  + e_{12}^{i}  \sqrt{\lambda_{2}^{i} } Z_{2}^{i}  \label{eq:R_B} \\
%  R_{i} &=& \alpha_i + e_{21}^{i} \sqrt{\lambda_{1}^{i}} Z_{1}^{i} + e_{22}^{i} \sqrt{\lambda_{2}^{i}} Z_{2}^{i} \label{eq:r_i}
%\end{eqnarray}
%

\noindent
For  completeness, in Appendix \ref{sec:Eigen}, we report the algebraic calculations to obtain the principal component factorization \eqref{eq:R_B} and \eqref{eq:r_i}.

\subsubsection{Tracking error through benchmark-asset principal component factorization }\label{M1}

In this section we provide the expression of the tracking error obtained by the principal component factorization introduced above.
From \eqref{TE_p}, we have $TE^{\,(post)}(R_{B} - R_{P}(w))=\sqrt{\E[(R_{B} - R_{P}(w))^2]}$.
Now, we can write
\begingroup\makeatletter\def\f@size{12}\check@mathfonts
\begin{eqnarray}
% \nonumber % Remove numbering (before each equation)
  \E[(R_{B} - R_{P}(w))^2] &=& \E[R_{B}^2]   + \E[R_{P}(w)^2] - 2 \E[R_{B} R_{P}(w)] \nonumber \\
    &=& \breve{\sigma}_B^2 + \breve{\mu}_B^2 + \breve{\sigma}_P^2(w) + \breve{\mu}_P^2(w) - 2 \Cov(R_{B},R_{P}(w)) - 2 \breve{\mu}_B \breve{\mu}_P(w) \nonumber \\
   &=& \breve{\sigma}_B^2 + \breve{\sigma}_P^2(w) + (\breve{\mu}_B - \breve{\mu}_P(w)])^2 - 2 \Cov(R_{B},R_{P}(w))  \nonumber
\end{eqnarray}
\endgroup
where $\breve{\mu}_B=\E[R_{B}]$, $\breve{\sigma}_B^2=\Var[R_{B}]$,
$\breve{\mu}_P(w)=\E[R_{P}(w)]$, $\breve{\sigma}_P^2(w)=\Var[R_{P}(w)]$,
and
\begin{equation*}
  \Cov(R_{B},R_{P}(w))=\Cov(R_{B},\sum_{i=1}^{n} R_i w_i)=\sum_{i=1}^{n}  w_i \Cov(R_{B},R_i) \, .
\end{equation*}

\paragraph{Gaussian distributions}
\hfill
\break
\noindent
Using the principal component factorization \eqref{eq:R_B}-\eqref{eq:r_i} with \eqref{eq:ParNormal}
we have
\begingroup\makeatletter\def\f@size{11}\check@mathfonts
  \begin{eqnarray}
% \nonumber % Remove numbering (before each equation)
 \Cov(R_{B},R_i) &=& \Cov(\alpha_B + e_{11}^{i} \sqrt{\lambda_{1}^{i} } Z_{1}^{i}  + e_{12}^{i}  \sqrt{\lambda_{2}^{i} } Z_{2}^{i}, \, \alpha_i + e_{21}^{i} \sqrt{\lambda_{1}^{i}} Z_{1}^{i} + e_{22}^{i} \sqrt{\lambda_{2}^{i}} Z_{2}^{i}) \nonumber \\
  &=& e_{11}^{i} e_{21}^{i} \lambda_{1}^{i} + e_{12}^{i} e_{22}^{i} \lambda_{2}^{i} \nonumber \, .
\end{eqnarray}
\endgroup
  Therefore,
\begingroup\makeatletter\def\f@size{12}\check@mathfonts
  \begin{eqnarray}
% \nonumber % Remove numbering (before each equation)
 \Cov(R_{B},R_{P}(w)) &=& \sum_{i=1}^{n}  w_i e_{11}^{i} e_{21}^{i} \lambda_{1}^{i} + \sum_{i=1}^{n}  w_i e_{12}^{i} e_{22}^{i} \lambda_{2}^{i} \nonumber \\
  &=& \sum_{i=1}^{n}  w_i C_{1}^{i} + \sum_{i=1}^{N}  w_i C_{2}^{i} \label{eq:covariance} \, .
\end{eqnarray}
\endgroup
where, as shown in Appendix \ref{app:NormalCase},
\begingroup\makeatletter\def\f@size{12}\check@mathfonts
  \begin{eqnarray}
% \nonumber % Remove numbering (before each equation)
 C_{1}^{i} &=& e_{11}^{i} e_{21}^{i} \lambda_{1}^{i} =  \frac{\rho_i\sigma_B\sigma_i(\lambda^i_1-\sigma^2_B)\lambda^i_1}{\rho^2_i\sigma^2_B\sigma^2_i+(\lambda^i_1-\sigma^2_B)^2} \nonumber \\
 C_{2}^{i}  &=& e_{12}^{i} e_{22}^{i} \lambda_{2}^{i} = \frac{\rho_i\sigma_B\sigma_i(\lambda^i_2-\sigma^2_B)\lambda^i_2}{\rho^2_i\sigma^2_B\sigma^2_i+(\lambda^i_2-\sigma^2_B)^2} \nonumber \, .
\end{eqnarray}
\endgroup
Hence, using \eqref{eq:covariance}, we can write
\begingroup\makeatletter\def\f@size{12}\check@mathfonts
\begin{equation} \label{eq:TEviapcf}
  TE^{\,(post)}(R_{B} - R_{P}(w))= \biggl(\breve{\sigma}_B^2 + \breve{\sigma}_P^2(w) + (\breve{\mu}_B - \breve{\mu}_P(w)])^2 - 2 \biggl(  \sum_{i=1}^{n}  w_i C_{1}^{i}  +\sum_{i=1}^{n}  w_i C_{2}^{i}  \biggr) \biggr)^{\frac{1}{2}} \, .
\end{equation}
\endgroup
We first notice that assuming $\rho_i >0$ \citep[as typically happens in practice, see, e.g.,][]{Martens2001, Zhang2020}, since $\lambda^i_1 \geq \lambda^i_2 \geq 0$, $\lambda_{1}^{i} \geq \sigma^2_B$, and $\lambda_{2}^{i} \leq \sigma^2_B$, as shown in Appendix \ref{app:NormalCase}, %
$C_{1}^{i} \geq 0$ and $C_{2}^{i} \leq 0$.
So, according to Expression \eqref{eq:TEviapcf} to decrease TE, the idea is to select the assets that maximize $C_{1}^{i}$ and minimize $C_{2}^{i}$.
Second, we observe that if we consider a portfolio consisting of a single asset with $ \rho_i=1$ and $\sigma^2_i=\sigma^2_B$, the tracking error would be equal to 0.
In terms of benchmark-asset principal component factorization, this means that $\lambda_{1}^{i}=2 \sigma^2_B$
and $\lambda_{2}^{i}=0$, namely $C_{1}^{i}=\sigma^2_B$ and $C_{2}^{i} = 0$.
Therefore, we propose an index tracking strategy without any optimization algorithm where we select $K<n$ assets for which their values of $\lambda_{1}^{i}$ and $\lambda_{2}^{i}$ are closer
to the "optimal" values $2 \sigma^2_B$ and $0$.
A possible way to do this is to select the assets that have the lowest Euclidean distance
\begin{equation*} \label{eq:EuDist}
    d_i=\lVert \mathbf{\lambda^{i}} -\mathbf{\lambda^{0}} \rVert
\end{equation*}
where $\mathbf{\lambda^{i}}=(\lambda_{1}^{i}, \lambda_{2}^{i})$ and
$\mathbf{\lambda^{0}}=(2 \sigma^2_B, 0)$.

\noindent
The financial intuition is to select the factor loadings that best match the index with the selected assets. Furthermore, since PCA is a dimension reduction technique, the first eigenvalues may capture technical indicators such as directionality and market momentum. Recent examples in the literature can be found in \cite{Liang2020}  extracting common factors in commodity futures, and \cite{Zheng2021} for dimension reduction and forecasting.

\paragraph{Skew-normal distributions}
\hfill
\break
\noindent
In the case of a skew-normal returns, with analogous arguments,
using the principal component factorization \eqref{eq:R_B}-\eqref{eq:r_i} with \eqref{eq:ParSkew},
we find that
\begingroup\makeatletter\def\f@size{12}\check@mathfonts
  \begin{eqnarray}
% \nonumber % Remove numbering (before each equation)
 \Cov(R_{B},R_i) &=& \Cov(\alpha_B + e_{11}^{i} \sqrt{\widetilde{\lambda}_{1}^{i} } Z_{1}^{i}  + e_{12}^{i}  \sqrt{\widetilde{\lambda}_{2}^{i} } Z_{2}^{i}, \, \alpha_i + e_{21}^{i} \sqrt{\widetilde{\lambda}_{1}^{i}} Z_{1}^{i} + e_{22}^{i} \sqrt{\widetilde{\lambda}_{2}^{i}} Z_{2}^{i}) \nonumber \\
  &=& e_{11}^{i} e_{21}^{i} \xi_1^i\widetilde{\lambda}_{1}^{i} + e_{12}^{i} e_{22}^{i} \xi_2^i\widetilde{\lambda}_{2}^{i} \nonumber \, ,
\end{eqnarray}
\endgroup
%
%where $Z_{1}^{i} \sim SN(0,1, \beta^i_1)$ and $Z_{2}^{i}  \sim SN(0,1, \beta^i_2)$ are i.i.d. random variables, the eigenvalues $\widetilde{\lambda}_{1}^{i}$ and $\widetilde{\lambda}_{2}^{i}$ are as in \eqref{app:eigenV_Skew}, and the corresponding eigenvectors $e^i_1$ and $e^i_2$ are as in \eqref{app:eigv1} and \eqref{app:eigv2}, respectively.
Furthermore, we have that
\begin{eqnarray}
% \nonumber % Remove numbering (before each equation)
 \xi_1^i  &=& \Var[Z_{1}^{i}] =\biggl( 1-\frac{2(\delta_1^i)^2}{\pi}\biggr) \nonumber \\
  \xi_2^i &=& \Var[Z_{2}^{i}] =\biggl( 1-\frac{2(\delta_2^i)^2}{\pi}\biggr) \nonumber
\end{eqnarray}
%
%  $$
%  \Delta_j^i=\biggl( 1-\frac{2(\delta_j^i)^2}{\pi}\biggr) \qquad (j=1,2).
%  $$
where $\delta_q^i=\frac{\beta_q^i}{\sqrt{1+(\beta_q^i)^2}}$ with $q=1, 2$.
Hence,
%
%\begingroup\makeatletter\def\f@size{9}\check@mathfonts
  \begin{eqnarray}
 \Cov(R_{B},R_{P}(w)) &=& \sum_{i=1}^{N}  w_i e_{11}^{i} e_{21}^{i} \xi^i_1\widetilde{\lambda}_1^{i} + \sum_{i=1}^{N}  w_i e_{12}^{i} e_{22}^{i} \xi^i_2\widetilde{\lambda}_2^{i} \nonumber \\
  &=& \sum_{i=1}^{N}  w_i \widetilde{C}_{1}^{i} + \sum_{i=1}^{N}  w_i \widetilde{C}_{2}^{i} \nonumber \, .
\end{eqnarray}
%\endgroup
%
where
%
%\begingroup\makeatletter\def\f@size{9}\check@mathfonts
  \begin{eqnarray}
 \widetilde{C}_{1}^{i} &=& e_{11}^{i} e_{21}^{i} \xi^i_1 \widetilde{\lambda}_{1}^{i} =  \frac{\rho_i \sigma_B \sigma_i (\lambda^i_1-\sigma^2_B) \xi^i_1 \widetilde{\lambda}^i_1}{\rho^2_i\sigma^2_B\sigma^2_i+(\lambda^i_1-\sigma^2_B)^2} \nonumber \\
 \widetilde{C}_{2}^{i}  &=& e_{12}^{i} e_{22}^{i}
 \xi^i_2\widetilde{\lambda}_{2}^{i} = \frac{\rho_i \sigma_B \sigma_i (\lambda^i_2-\sigma^2_B) \xi^i_2\widetilde{\lambda}^i_2}{\rho^2_i \sigma^2_B \sigma^2_i + (\lambda^i_2-\sigma^2_B)^2} \nonumber \,,
\end{eqnarray}
%\endgroup
%
and $\lambda^i_1$ and $\lambda^i_2$ are as in \eqref{lam11} and \eqref{lam12}, respectively.
Hence, the \emph{ex-post} tracking error can be expressed as follows
 %
%\begingroup\makeatletter\def\f@size{7.5}\check@mathfonts
\begin{equation*} \label{eq:TEviapcf_Skew}
  TE^{\,(post)}(R_{B} - R_{P}(w))= \biggl(\breve{\sigma}_B^2 + \breve{\sigma}_P^2(w) + (\breve{\mu}_B - \breve{\mu}_P(w))^2 - 2 \biggl(  \sum_{i=1}^{N}  w_i \widetilde{C}_{1}^{i}  +\sum_{i=1}^{N}  w_i \widetilde{C}_{2}^{i}  \biggr) \biggr)^{\frac{1}{2}} \, ,
\end{equation*}
%\endgroup
%
where, again, $\breve{\mu}_B=\E[R_{B}]$, $\breve{\sigma}_B^2=\Var[R_{B}]$,
$\breve{\mu}_P(w)=\E[R_{P}(w)]$, and $\breve{\sigma}_P^2(w)=\Var[R_{P}(w)]$.
Also in this case assuming $\rho_i >0$, since $\widetilde{\lambda}^i_1 \geq \widetilde{\lambda}^i_2 \geq 0$, $\lambda_{1}^{i} \geq \sigma^2_B$, and $\lambda_{2}^{i} \leq \sigma^2_B$,
$\widetilde{C}_{1}^{i} \geq 0$ and $\widetilde{C}_{2}^{i} \leq 0$.
Thus, following the same rationale illustrated for the normal case,
we select $K<n$ assets such that their values of $\widetilde{\lambda}_{1}^{i}$ and $\widetilde{\lambda}_{2}^{i}$ are closer
to the "optimal" values $\widetilde{\lambda}_{1}^{0}=2 \sigma^2_B \bigl( 1-\frac{2\delta^2_B}{\pi}\bigr)$ and $\widetilde{\lambda}_{2}^{0}=0$.

Namely, we select the assets with the lowest Euclidean distance
\begin{equation} \label{eq:EuDist}
    \widetilde{d}_i=\lVert \mathbf{\widetilde{\lambda}^{i}} -\mathbf{\widetilde{\lambda}^{0}} \rVert
\end{equation}
where $\mathbf{\widetilde{\lambda}^{i}}=(\widetilde{\lambda}_{1}^{i}, \widetilde{\lambda}_{2}^{i})$ and
$\mathbf{\widetilde{\lambda}^{0}}=\bigl(2 \sigma^2_B \bigl( 1-\frac{2\delta^2_B}{\pi}\bigr), 0\bigr)$ (see Appendix \ref{Eigsn}).

\subsubsection{Description of the hybrid PCA procedure for IT}

Below we present a brief description of the hybrid PCA procedure for Index Tracking (IT) purpose,
where we identify the $K$ assets with the lowest values $\widetilde{d}_i$. We define a portfolio where the weights are
decreasing for increasing values of $\widetilde{d}_i$. Then, we compute the \emph{ex-post} and \emph{ex-ante} tracking errors for such a portfolio,
and the future index tracking portfolio returns by Eq. \eqref{ERts}.
More precisely, in the case of skew-normal distributed returns, the hPCA procedure consists of the following steps.
\begin{enumerate}
\item Let $T$ denote the length of the time series analyzed and let $L$ be the length of the in-sample window.
Set $\tau \in [1,T-L+1]$, consider a fixed size rolling window $\mathcal{I}=\{\tau, \tau+1,.., \tau+L-1\}$,
and denote by $r_i(j)$ (with $i\in \{ 1,...,n \}$) and by $r_{B}(j)$ the historical scenarios at time $j\in \mathcal{I}$ of the returns of the asset $i$ and the benchmark, respectively.
\item At time $s=t-1= \tau+L-1$, calibrate on the time-window $\mathcal{I}$ the parameters $\mu_B,\,\sigma_B,\,\beta_B$ of $R_B$,
$\mu_i,\,\sigma_i,\,\beta_i$ of $R_i$, and the parameter $\rho_i$ that measures the dependence between $R_B$ and $R_i$ with $i\in N$.
For this purpose, we use the Maximum Likelihood Estimation (MLE) method, described in Section \ref{sec:MLE}, thus obtaining the calibrated parameters $(\hat \mu_B,\hat \sigma_B,\hat \beta_B)$ and $(\hat \mu_i,\hat \sigma_i,\hat \beta_i)$ (with $i=1, \ldots, n$) for the benchmark and for the assets, respectively.
\item Fix $K \ll n$, the number of assets in the tracking portfolio (e.g., 10 out of 500 assets available in the investment universe).
As described in Section \ref{M1}, we choose the $K$ assets
for which their eigenvalues, obtained from the benchmark-asset principal component factorization, namely $\widetilde{\lambda}_{1}^{i}$ and $\widetilde{\lambda}_{2}^{i}$, are closest
to the ideal values $2 \sigma^2_B \bigl( 1-\frac{2\delta^2_B}{\pi}\bigr)$ and $0$.
More precisely, we select the $K$ assets $\{i_1,i_2,...,i_K \} \subset N$ for which $\widetilde{d}_{i_1} \leq \widetilde{d}_{i_2}  \leq \ldots  \leq \widetilde{d}_{i_K}$.
\item Compute the weights $\hat{w}$ of the tracking portfolio giving decreasing importance to the selected assets for increasing values of $\widetilde{d}_{i}$. In this experiment we consider
\begin{equation} \label{optW2}
\hat{w}_i = \begin{cases}
         \displaystyle \frac{K - h + 1}{\sum_{h=1}^{K} h}=\frac{2(K - h + 1)}{K (K+1)} \quad &\mbox{if} \quad i \in \{i_h\}_{h=1,\ldots, K}  \\
          0 \quad &\mbox{if} \quad i \notin \{i_h\}_{h=1,\ldots, K} \\
     \end{cases}
  \end{equation}
\item Finally, by means of the tracking portfolio $\hat{w}$, compute
the \emph{ex-post} tracking error \eqref{Eq:ExPostTE},
the \emph{ex-ante} tracking error \eqref{VarRts}, and
the forecasted tracking portfolio return \eqref{ERts} provided in Theorem \ref{prop31}.
%%%
\end{enumerate}
\medskip

\noindent
In Table \ref{Tab:Algo2} we summarize the  hybrid PCA procedure (pseudocode) for Index Tracking.
\begin{table}[htbp]%\small
	\centering
	\scalebox{0.95}{
	\begin{threeparttable}
		\begin{tabular}{|l|}
			\hline\noalign{\smallskip}
            \green{1.} Fix $T,\,L$ and set $\tau = 1$; \\
            \green{2.} \textbf{while} $\tau < T-L+1$ \\			
			\green{3.} take the observations $r_B(j)$ and $r_i(j)$ (with $i\in N$) for all $j\in \mathcal{I}=\{ \tau, \tau+1,.., \tau+L-1\}$;\\
			\green{4.} calibrate the parameters $(\hat \mu_B,\hat \sigma_B, \hat \beta_B)$ by solving Problem \eqref{cal1}; \\
			\green{5.} compute $\hat \rho_i$ (the Spearman correlation between $r_B$ and $r_i$) and $\hat \beta_i$ by Eq. \eqref{eq:CalBetaBhat};                                       \\
			\green{6.} calibrate $(\hat \mu_i,\hat \sigma_i)$ by solving Problem \eqref{cal2} for all $i\in N$; \\
			\green{7.} find the $K$ assets with the lowest $\widetilde{d}_i$ as in \eqref{eq:EuDist}; \\
			\green{8.} compute  the weights $\hat{w}$ of the tracking portfolio
			by \eqref{optW2}; \\
			\green{9.} compute the \emph{ex-post} and \emph{ex-ante} tracking error by \eqref{Eq:ExPostTE} and \eqref{VarRts}, respectively, and \\
			%\hspace{6}
             compute the forecasted tracking portfolio return \eqref{ERts}; \\
			\green{10.} update $t=t+1$; \\
			\green{11.} \textbf{end} \\
			\noalign{\smallskip}\hline
		\end{tabular}
	\end{threeparttable}
	}
		\caption{Pseudocode of the hybrid PCA procedure}
	\label{Tab:Algo2}
\end{table}

\noindent
In the case of normally distributed markets, we follow a procedure similar to that described in Table \ref{Tab:Algo2}.
More precisely,
we estimate the parameters $(\hat \mu_B,\hat \sigma_B)$ and $(\hat \mu_i,\hat \sigma_i)$ $(i\in N)$ through the sample mean and the sample standard deviation of $r_B(j)$ and $r_i(j)$, respectively, for any $j\in \mathcal{I}$.
Moreover, the \emph{ex-post} tracking error, the \emph{ex-ante} tracking error and the forecasted tracking portfolio returns are given by \eqref{Eq:ExPostTE_Normal}, \eqref{TEante_norm} and \eqref{prev_norm}, respectively, in Corollary \ref{hPCA_normal}.

\subsubsection{Calibration of the parameters through MLE}\label{sec:MLE}

In this section we show, under the assumption of skew-normal distributions of returns, how to calibrate the model's parameters $(\mu_B, \sigma_B, \beta_B)$ and $(\mu_i, \sigma_i, \beta_i)$ for all $i \in N$ using the Maximum Likelihood Estimation (MLE) method.

\noindent
Let $r_B(j)$ denote the observations of the benchmark index return $R_B \sim SN(\mu_B, \sigma_B^2, \beta_B)$, for any $j \in \mathcal{I}=\{ \tau, \tau+1,.., \tau+L-1\}$.
Following the results provided by Azzalini in \cite{Azzalini2021},
we can write the likelihood function of $R_B$ as
$$
L_B(\mu_B,\sigma_B,\beta_B)=
\frac{2^L}{\sigma_B^L} \, \prod_{j=\tau}^{\tau+L-1} \phi\biggl( \frac{r_B(j) - \mu_B}{\sigma_B} \biggr) \Phi\biggl(\beta_B\, \frac{r_B(j) -\mu_B}{\sigma_B} \biggr),
$$
and, therefore the estimated parameters $(\hat \mu_B,\hat \sigma_B, \hat \beta_B)$ can be
found by solving the following optimization problem
\begin{equation}\label{cal1}
(\hat \mu_B, \hat \sigma_B, \hat \beta_B) = \displaystyle \argmax_{\mu_B, \sigma_B,\beta_B} \ln L_B(\mu_B,\sigma_B,\beta_B)  \, .
\end{equation}
Then, once estimated $\hat \beta_B$, we can compute $\hat \beta_i$ from Eq. \eqref{beta1}, namely
\begin{equation}\label{eq:CalBetaBhat}
\widehat{\beta}_i=\frac{\hat \beta_B}{\sqrt{1+(1+\beta_B^2)\bigl( 1-\frac{2\hat \delta_B^2}{\pi}\bigr)\bigl(\frac{1}{\hat \rho_i^2}-1\bigr)}} \, ,
\end{equation}
where $\hat \delta_B = \frac{\hat \beta_B}{\sqrt{1+\hat \beta_B^2}}$,
and $\hat \rho_i$ is the Spearman correlation between $R_B$ and $R_i$.

\noindent
From Proposition \ref{prop21}, we can obtain the likelihood function of $R_i$ conditioned to $\hat \beta_i$
%
%\begingroup\makeatletter\def\f@size{8}\check@mathfonts
\begin{eqnarray}
% \nonumber % Remove numbering (before each equation)
 L_i(\mu_i,\sigma_i)  &=& \frac{2^L}{\biggl(\sigma_i \sqrt{\hat \rho^2_i+\bigl( 1-\frac{2\hat \delta_B^2}{\pi}\bigr)(1-\hat \rho_i^2\bigr)} \biggr)^L } \prod_{j= \tau}^{\tau+L-1} \phi\biggl( \frac{r_i(j)-\mu_i}{\sigma_i \sqrt{\hat \rho^2_i+\bigl( 1-\frac{2\hat \delta_B^2}{\pi}\bigr)(1-\hat \rho_i^2\bigr)}} \biggr) \nonumber \\
   &\cdot&  \Phi\biggl(\hat \beta_i\, \frac{r_i(j)-\mu_i}{\sigma_i \sqrt{\hat \rho^2_i+\bigl( 1-\frac{2\hat \delta_B^2}{\pi}\bigr)(1-\hat \rho_i^2\bigr)}} \biggr), \nonumber
\end{eqnarray}
%\endgroup
%
by which we can find $\hat \mu_i$ and $\hat \sigma_i$ as follows
\begin{equation}\label{cal2}
(\hat \mu_i,\hat \sigma_i)=\argmax_{\mu_i, \sigma_i} \ln L_i(\mu_i,\sigma_i) \, .
\end{equation}

\subsection{The portfolio optimization models for index tracking}\label{Sec:BaseMod}

In this section, we provide the mathematical formulation of the standard Index Tracking (IT) optimization problem based on the minimization of the tracking error in terms of objective function (Section \ref{Sec:StandardIT}) that we call \emph{baseline} index tracking approach and we compare it with our approach. Furthermore, we also present the IT strategy used in the financial industry that we call \emph{practitioner} index tracking approach (Section \ref{Sec:practitioners}).

\subsubsection{The baseline index tracking approach}\label{Sec:StandardIT}

For each rolling time-window $\mathcal{I}=\{\tau, \tau+1,.., \tau +L-1\}$ with $\tau \in [1,T-L+1]$,
we select the optimal tracking portfolio obtained by solving the following Mixed Integer Quadratic
Programming problem
\begin{equation}\label{eq:baseline}
\left\lbrace \begin{array}{lll}
\displaystyle\min_{w} & TE^{\,(post)}(w)=\sqrt{\displaystyle \frac{1}{L} \sum_{j \in \mathcal{I}}\bigl(r^{B}_{j} - r^{P}_{j}(w) \bigr)^2} & \\
\mbox{s.t.} & & \\
& \displaystyle \sum_{i=1}^n w_i=1  &  \\
&  \sum_{i=1}^N y_i = K  & \\
& 0 \leq w_i \leq y_i  & i=1, \ldots, n \\
& y_i \in \{0, 1\}  & i=1, \ldots, n
\end{array} \right.
\end{equation}
where

\noindent
$r^{B}_{j}$ represents the historical scenario of the benchmark index return at time $j \in \mathcal{I}$;

\noindent
$r_{i, j}$ is the historical return of asset $i$ at time $j$;

\noindent
$w$ is the vector of the portfolio weights whose elements $w_i$ are the fractions of a given capital invested in asset $i$;

\noindent
$r^{P}_{j}(w) = \sum_{i=1}^n w_i r_{i, j}$ represents the historical scenario of the portfolio return at time $j \in \mathcal{I}$;

\noindent
$n$ is the number of assets available in the investment universe;

\noindent
$K$ is the fixed number of assets selected in the tracking portfolio (in our empirical analysis $K=10$).

\subsubsection{A  practitioner approach to index tracking}\label{Sec:practitioners}

Here we briefly describe a common index tracking optimization model adopted in the financial industry that we call \emph{practitioner} IT approach and use it as a second baseline model for comparison purposes.

\noindent
As mentioned above, for building a passive  portfolio, such as an indexed one,
an asset manager seeks to manage his exposure to the benchmark by selecting the least number of securities. So, first, the benchmark is decomposed into sectors and, once the weight of each sector is known, the trader chooses the stocks that he believes will perform best in each sector.
The first step is called asset allocation, the second is called stock picking. Here we focus on asset allocation, that is, on determining the weights of the sectors.

The above \emph{practitioner} index tracking approach consists in considering as  constituents the sub-indices (sectors) of the benchmark. Typically, the number of constituents of this new investment universe is chosen exactly equal to $K$.

\noindent
Then, for the \emph{practitioner} index tracking approach we compute the optimal weights of the portfolio replicating the benchmark by solving the following convex quadratic programming problem
\begin{equation*}\label{Opt_probl2}
\left\lbrace \begin{array}{lll}
\displaystyle\min_{w} & TE^{\,(post)}(w)=\sqrt{\displaystyle \frac{1}{L} \sum_{j \in \mathcal{I}}\bigl(r^{B}_{j} - \widetilde{r}^{P}_{j}(w) \bigr)^2} & \\
\mbox{s.t.} & & \\
& \displaystyle \sum_{i=1}^K w_i=1  &  \\
& w_i \geq 0  & i=1, \ldots, K
\end{array} \right.
\end{equation*}
where

\noindent
for a specific $\tau \in [1,T-L+1]$, $\mathcal{I}=\{\tau, \tau+1,.., \tau +L-1\}$ represents the rolling time window;

\noindent
$r^{B}_{j}$ represents the historical scenario of the benchmark index return at time $j \in \mathcal{I}$;

\noindent
$\widetilde{r}_{i, j}$ is the historical return of sector $i$ at time $j$;

\noindent
$w$ is the vector of the portfolio weights whose elements $w_i$ are the fractions of a given capital invested in sector $i$;

\noindent
$\widetilde{r}^{P}_{j}(w) = \sum_{i=1}^n w_i \widetilde{r}_{i, j}$ represents the historical scenario of the portfolio return of sub-indices at time $j \in \mathcal{I}$;

\noindent
$K$ is the number of sub-indices available in the market.

\section{Empirical analysis} \label{Sec:NumSim}

Here we provide an empirical analysis that compares the IT approaches described in Section \ref{Sec:algo}, both in terms of computational efficiency and performance.
The experiments have been conducted on the S\&P 500 dataset, which consists of weekly prices retrieved from 7 January 2005 to 29 May 2020, for a total of $T=804$ observations.
For the sake of space and readability, details about the analyzed dataset are available in Appendix \ref{app:DataInfo} where we report the list of the S\&P 500 constituents as well as the $K=10$ sectors (sub-indices). %%

\noindent
For the out-of-sample performance analysis, we adopt a rolling time-window, and we allow for the possibility of rebalancing the portfolio composition during the holding period at fixed intervals. In this study, we set 1 year ($L=52$) for the in-sample window and 1 week both for the rebalancing interval and the holding period. On these portfolios we compute the \emph{ex-post} tracking error \eqref{Eq:ExPostTE},
the \emph{ex-ante} tracking error \eqref{VarRts}, and
the forecasted tracking portfolio returns \eqref{ERts}.
The cardinality $K$ of the analyzed tracking portfolios is set equal to 10.

All the procedures have been implemented in PYTHON 3.10 and have been executed on a laptop with an Intel(R) Core(TM)
i7-4800HQ CPU @ 2.6 GHz processor and 16,00 GB of RAM.
Furthermore, the Mixed Integer Quadratic Programming problem \eqref{eq:baseline}
is solved by using GUROBI 9.5 called from PYTHON \citep{Gurobi95}.

\subsection{\textbf{Computational results}}

%ATTENZIONE, CONTROLLARE CHE LE COLONNE NELLE TABELLE NON SIANO INVERTITE

In this section, we report and compare the performance analysis obtained by the hybrid PCA (hPCA) strategies (with normal and skew-normal assumptions, see Section \ref{hPCA}) and by the Index Tracking approaches (described in Section \ref{Sec:BaseMod}).
\noindent
Figure \ref{FigU1} depicts the \emph{ex-post} tracking error computed by the normal and skew-normal hPCA strategies, and by the baseline and practitioners’ IT approaches (see Sections \ref{Sec:StandardIT} ad \ref{Sec:practitioners}, respectively).
From that comparison the \emph{ex-post} tracking error of the hPCA, under the assumption of skew-normal returns, is globally lower than those of the competitor models.
\begin{figure}[!htbp]
      \centering
         \includegraphics[width=1\textwidth]{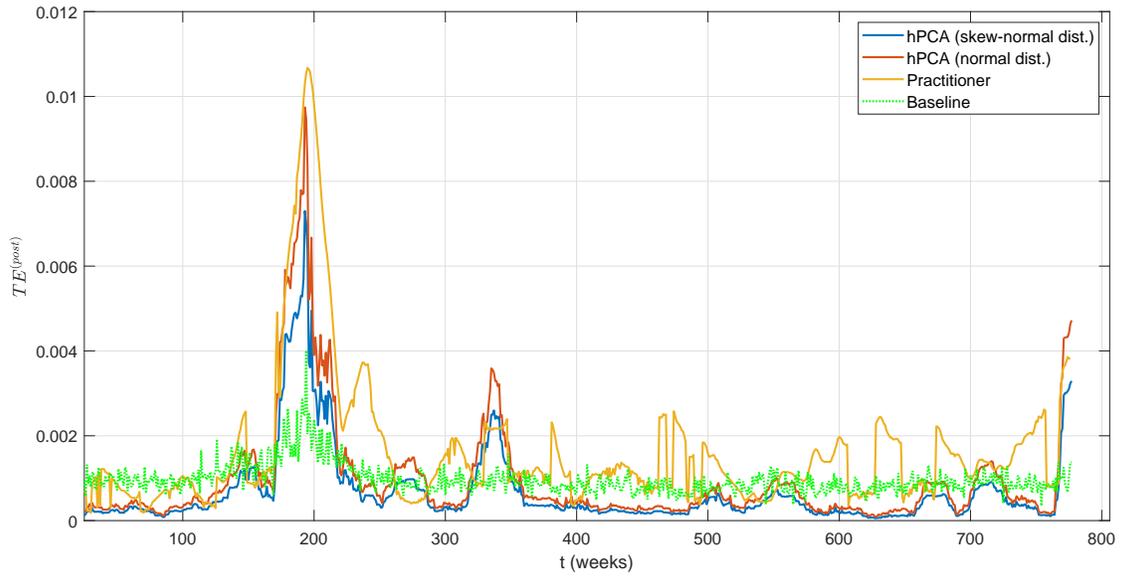}
         \caption{\emph{Ex-post} Tracking Error (TE) for the skew-normal hPCA (blue), the normal hPCA (red), the baseline (green), the practitioner (yellow) strategies.}
        \label{FigU1}
\end{figure}

\noindent
Similarly, Figure \ref{FigU2} shows the \emph{ex-ante} tracking error for all considered models.
In addition, to reveal further information on the performance of the different approaches, the \emph{ex-ante} tracking error provides an insight into portfolio construction and risk budgeting. This is because asset managers use tracking error as a measure of the risk of deviating from the benchmark. Since an index tracking portfolio is assigned a risk/reward target, a sudden change in tracking error requires a swiftly rebalance of the constituents.
From this point of view, the frenzy changes of the baseline model \eqref{eq:baseline} causes  disruptions in portfolio management that are not evident in the classical analysis of turnover often used in the literature.
\begin{figure}[!htbp]
      \centering
         \includegraphics[width=1\textwidth]{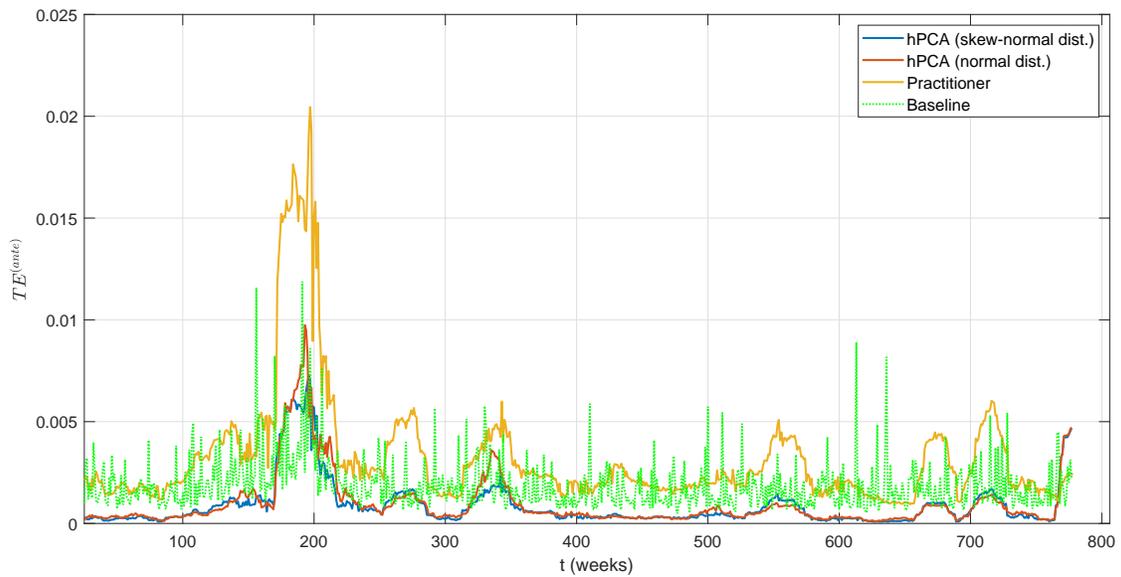}
         \caption{\emph{Ex-ante} Tracking Error (TE) for the skew-normal hPCA (blue), the normal hPCA (red), the baseline (green), the practitioner (yellow) strategies.}

        \label{FigU2}
\end{figure}

\noindent
Globally, comparing both \emph{ex-post} and \emph{ex-ante} tracking error between models,
Table  \ref{Tab_RelTE} reports the number of times (in percentage) that the tracking error of the portfolio constructed with the skew-normal hPCA strategy is lower than that obtained from the other models. We observe that the skew-normal hPCA strategy always shows values greater than 90\% thus confirming the validity of the suggested approach.
\begin{table}[htbp!]
\centering
  \begin{adjustbox}{width=1\textwidth}
\begin{threeparttable}

    \begin{tabular}{|c|c|c|c|}
\toprule
\textbf{TE}  &

\begin{tabular}[c]{@{}c@{}}\textbf{skew-normal hPCA} \\ \textbf{vs. normal hPCA}
\end{tabular}
&
\begin{tabular}[c]{@{}c@{}}\textbf{skew-normal hPCA} \\
\textbf{vs. \textbf{Practitioner}}
\end{tabular}
&
\begin{tabular}[c]{@{}c@{}}\textbf{skew-normal hPCA} \\
\textbf{vs. Baseline}
\end{tabular} \\
\hline
$TE^{\,(post)}$  &  94.01\% & 93.13\%  & 90.51\%  \\
\hline
$TE^{\,(ante)}$ &    96.91\%    & 92.26\%  & 89.51\%  \\
 \bottomrule
\end{tabular}

\end{threeparttable}
  \end{adjustbox}
  \caption{Number of times (in percentage) that the tracking error of the skew-normal hPCA portfolio is lower than that obtained from the other strategies.}\label{Tab_RelTE}
\end{table}

%\subsection{Performance analysis}

So far we discussed the tracking error, which, as mentioned in Section \ref{Sec:LitRev}, may be seen as the risk of not investing in the benchmark. The other side of the coin is the reward.
Figure \ref{FigU3}  displays the differences between the  returns of the benchmark and the returns of the replicating portfolios.
\begin{figure}[!htbp]
      \centering
         \includegraphics[width=1\textwidth]{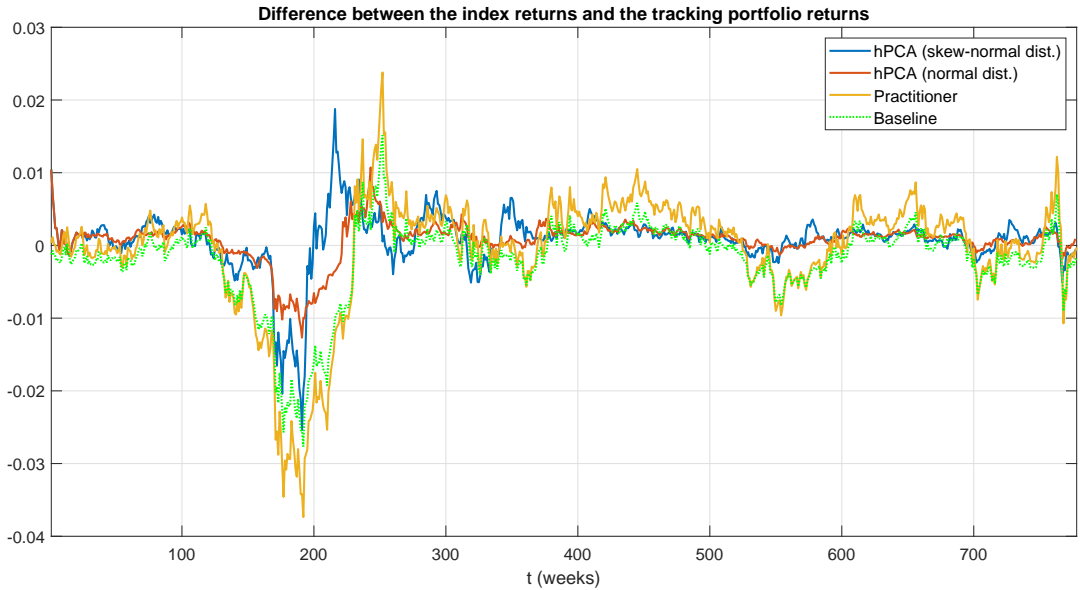}
         \caption{Difference between the benchmark returns and the returns of the tracking portfolios obtained by the skew-normal hPCA (blue), the normal hPCA (red), the baseline (green), the practitioner (yellow) strategies.}
        \label{FigU3}
\end{figure}
Observe that while Figure \ref{FigU1} displays lower \emph{ex-post} tracking error during turbulent periods (e.g., financial crises of 2007-9) for the baseline model, actually,
Figure \ref{FigU3} shows that the baseline model is performing worse than the skew-normal hPCA model.
Still, by looking at the said figure, similar behaviors can be observed in other turbulent occasions.
A detailed account is available in Table \ref{Tab_over_TE}, where, for each non-overlapping subperiods,
we report the average annualized excess return of a portfolio w.r.t. the benchmark.
We can observe that the hPCA approach generally shows
better performance.
\begin{table}[htbp!]
\small
\centering
      \begin{adjustbox}{width=1\textwidth}
\begin{threeparttable}

    \begin{tabular}{|c|c|c|c|c|}
\toprule
\textbf{Interval}  & \textbf{skew-normal hPCA} & \textbf{normal hPCA}  & \textbf{Practitioner}  & \textbf{Baseline}  \\
\hline
[1, 100]  &  0.0880 &   0.0769 &   0.0196 &  -0.0624 \\

[101, 200]  &  -0.2236 &  -0.1354  & -0.4500  & -0.3913 \\

[201, 300]  &  0.2214  &  0.0864  &  0.0148  & -0.0655 \\

[301, 400]  &  0.0533   & 0.0725 &   0.0624  & -0.0350  \\

[401, 500]  &  0.1125  &  0.1143  &  0.2973  &  0.1102 \\

[501, 600]  &  0.0231 &   0.0210 &  -0.1067 &  -0.1454  \\

[601, 700]  & 0.0678  &  0.0755 &   0.1920  &  0.0461 \\

[701, 779]  & 0.0210   & 0.0264 &  -0.0411 &  -0.1019 \\

 \bottomrule
\end{tabular}

\end{threeparttable}
\end{adjustbox}
\caption{Over-performance across models and intervals. }
    \label{Tab_over_TE}
\end{table}

Concerning the forecast accuracy, the Mean Absolute Percentage Error (MAPE) provides a measure of goodness of fit between the forecasted tracking portfolio returns and those of the benchmark.
It is defined as %
$$
MAPE=\frac{1}{T-L}\sum_{j=L+1}^T
\biggl  \lvert \frac{r_j^B-r_j^F(w)}{r_j^B} \biggr \rvert \, .
$$
According to Table \ref{T:LCPvsBAWT}, we conclude that the skew-normal hPCA strategy provides better predictions; indeed the MAPE for the hPCA (under skew-normal distribution) is  1.23\% against 1.62\% of the hPCA (under normal distribution). This is in agreement with the previous analysis on the tracking error in Table \ref{Tab_RelTE}, where the skew-normal hPCA was shown to provide smaller tracking error. In other terms, one may expect that a lower MAPE can be expected to go hand in hand with a lower tracking error.

\begin{table}[ht!]
\centering

  \begin{adjustbox}{width=.8\textwidth}
\begin{threeparttable}

\begin{tabular}{ |c|c|c|c| }
\toprule
skew-normal hPCA & normal hPCA & Practitioner & Baseline \\
\hline
1.23\% & 1.62\%   & 4.28\% & 2.99\%  \\
\hline
%\bottomrule

%\multicolumn{5}{l}{\footnotesize Processor I7-4800HQ, 2.60GHz.} \\

\end{tabular}

 \end{threeparttable}
 \end{adjustbox}
 \caption{MAPE for the considered models}
\label{T:LCPvsBAWT}
\end{table}

\subsubsection{Computational efficiency}

In the previous sections, we have examined the performance of all the approaches both in terms of  risk and reward.
We also explained the need for a solution that does not require a continual overhaul of the indexed portfolio due to exceeding a certain risk/return target as set by the investor.

In this section, we compare the computational efficiency of all the analyzed Index Tracking strategies.
Table \ref{T:LCPvsBAWTime} provides the time (in seconds) required for computing the weights of the tracking portfolios, and our hybrid PCA strategy shows the best results.
We point out that the noticeable computational efficiency is somewhat expected because, by construction, the hPCA strategy is optimizationless.
\begin{table}[ht!]
\centering
\scalebox{1.2}{
\renewcommand{\arraystretch}{1.3}
\begin{tabular}{ |c|c|c|c| }
%\toprule
\hline
\textbf{skew-normal hPCA} & \textbf{normal hPCA} & \textbf{Practitioner} & \textbf{Baseline} \\
\hline
0.4736 & 0.2355   & 122.75 & 2457.23  \\
\hline
\end{tabular}
}
 \caption{Average running times (in seconds) for each iteration.}
\label{T:LCPvsBAWTime}
\end{table}

\section{Conclusions} \label{Sec:Conclusion}

Active portfolio managers are not able to beat their benchmark, and those that do, cannot replicate their performances in the following years.
According to \cite{SPIVA}, in the USA, out of 703 best-performing active managers only 146 stay in the top quartile after a year, 42 after two years, 13 after three years and 2 after four years.
Furthermore, ``while the turmoil and disruption caused by the pandemic should have offered numerous opportunities for outperformance (by active managers), 57\% of domestic equity funds lagged the S\&P Composite 1500 index during the one-year period ended Dec. 31, 2020'' \citep[see][]{SPIVA2020a}.
This failure on the part of active managers has sparked much interest in index-tracking passive management.

In this paper, we have proposed an innovative statistical methodology, based on a benchmark-asset principal component factorization, for determining a tracking portfolio that replicates the performance of a benchmark by investing in a smaller number of assets.
For passive managers who need to minimize the cost of monitoring and transactions, a limited number of constituents is critical.

\noindent
We have tested and validated on real-world data the hPCA approach for normal and skew-normal returns, compared with two index tracking portfolio optimization models
used in the literature and the financial industry.
From this comparison, we have observed that the \emph{ex-post} and \emph{ex-ante} tracking errors of the hPCA skew-normal portfolios are overall lower than those of competing models, and that the hPCA skew-normal strategy also offers the best predictions.
Furthermore, the standard optimization-based model, from a practical point of view, provides frenzy tracking error expectations  making it difficult to adopt by the investment management industry.
This is a critical factor because, for portfolio managers who base the risk budget on the \emph{ex-ante} tracking error, a so erratic and misleading value may seriously disrupt the portfolio's construction.
On the other hand, the suggested skew-normal hybrid PCA is more suitable because it provides smoother changes in tracking error expectations (thus reducing the disruption in operations) and, in terms of profitability, performs better than the other strategies in turbulent markets.
Last but not least, the hPCA strategy shows remarkable computational efficiency, since such a strategy, by construction, is optimizationless.

Future research developments might be directed to investigate the impact on the risk/reward profile by changing the number of assets in a passive portfolio.

{\footnotesize
\bibliographystyle{spbasic}
\bibliography{MyBibOptim20211125}
}

\appendix

\section{Benchmark-asset principal component factorization}\label{sec:Eigen}

\subsection{Normal case} \label{app:NormalCase}
Let $\Sigma_i$ denote, for all $i \in N$, the covariance matrix of the benchmark $B$ and the asset $i$ returns, that, in a normally distributed market, is
\begin{equation}\label{Sigma1}
\Sigma_i= \left(
    \begin{array}{cc}
   \sigma_{B}^{2} & \rho_{i} \sigma_{B} \sigma_{i} \\
   \rho_{i} \sigma_{B} \sigma_{i} & \sigma_{i}^{2} \\
    \end{array}
  \right)
\end{equation}
Using the spectral theorem \citep[see, e.g.,][]{meucci2005risk}, we can write
$$\Sigma_i = E^i \Lambda^i (E^i)^{T}=\left(
    \begin{array}{cc}
   e^i_{11} & e^i_{12} \\
   e^i_{21} & e^i_{22} \\
    \end{array}
  \right)
  \left(
    \begin{array}{cc}
   \lambda^i_{1} & 0 \\
   0 & \lambda^i_{2} \\
    \end{array}
  \right)
  \left(
    \begin{array}{cc}
   e^i_{11} & e^i_{21} \\
   e^i_{12} & e^i_{22} \\
    \end{array}
  \right)$$
where $E^i$ is the matrix of the eigenvectors and $\Lambda^i$ that of the eigenvalues.
The eigenvalues can be easily obtained by imposing that $\det \left(\Sigma_i- \lambda^i I \right)=0$, namely we have $(\lambda^i)^2 - \tr(\Sigma_i) \lambda^i + \det(\Sigma_i) = 0$. Thus,
\begin{eqnarray}
% \nonumber % Remove numbering (before each equation)
 \lambda_{1}^{i}  &=& \frac{1}{2} \left[ \sigma_{B}^{2} + \sigma_{i}^{2} + \sqrt{(\sigma_{B}^{2} - \sigma_{i}^{2})^2 +4 \rho_{i}^2 \sigma_{B}^{2} \sigma_{i}^{2}} \right] \label{lam11} \\
 \lambda_{2}^{i}  &=& \frac{1}{2} \left[ \sigma_{B}^{2} + \sigma_{i}^{2} - \sqrt{(\sigma_{B}^{2} - \sigma_{i}^{2})^2 +4 \rho_{i}^2 \sigma_{B}^{2} \sigma_{i}^{2}} \right] \label{lam12}
\end{eqnarray}
To obtain the explicit expressions of the eigenvectors $e^i_1=\left(
    \begin{array}{c}
   e^i_{11} \\
   e^i_{21}\\
    \end{array}
  \right)$ and $e^i_2=\left(
    \begin{array}{c}
   e^i_{12} \\
   e^i_{22}\\
    \end{array}
  \right)$, we can solve the following equations
  $$(\Sigma_i - \lambda^i_k I) e^i_q=0  \qquad \mbox{with} \qquad q=1,2.$$
  For $q=1$, the reduced row echelon form of the matrix $(\Sigma_i - \lambda^i_1 I)$ is
  $$
  \begin{pmatrix}
  1 & \frac{2\rho_i\sigma_B\sigma_i}{\sigma_{B}^{2} - \sigma_{i}^{2} - \sqrt{(\sigma_{B}^{2} - \sigma_{i}^{2})^2 +4 \rho_{i}^2 \sigma_{B}^{2} \sigma_{i}^{2}}}\\
  0 & 0
  \end{pmatrix}
  .$$
  Therefore, we have to solve
  $$
  \begin{pmatrix}
  1 & \frac{2\rho_i\sigma_B\sigma_i}{\sigma_{B}^{2} - \sigma_{i}^{2} - \sqrt{(\sigma_{B}^{2} - \sigma_{i}^{2})^2 +4 \rho_{i}^2 \sigma_{B}^{2} \sigma_{i}^{2}}}\\
  0 & 0
  \end{pmatrix}
  \begin{pmatrix}
  e^i_{11} \\
  e^i_{21}
  \end{pmatrix}
  =
  \begin{pmatrix}
  0 \\
  0
  \end{pmatrix}.
  $$
  If we take $e^i_{21}=k$, then $e^i_{11}=\displaystyle\frac{2\rho_i\sigma_B\sigma_ik}{-(\sigma_{B}^{2} - \sigma_{i}^{2}) + \sqrt{(\sigma_{B}^{2} - \sigma_{i}^{2})^2 +4 \rho_{i}^2 \sigma_{B}^{2} \sigma_{i}^{2}}}$.
  Thus,
 \begin{equation}\label{app:eigv1}
  \begin{pmatrix}
  e^i_{11} \\
  e^i_{21}
  \end{pmatrix}
  =
  \begin{pmatrix}
  \frac{2\rho_i\sigma_B\sigma_i}{-\sigma_{B}^{2} + \sigma_{i}^{2} + \sqrt{(\sigma_{B}^{2} - \sigma_{i}^{2})^2 +4 \rho_{i}^2 \sigma_{B}^{2} \sigma_{i}^{2}}} \\
  1
  \end{pmatrix}
  k
\end{equation}
  where
 $$
  k^2=\frac{(-(\sigma_{B}^{2} - \sigma_{i}^{2}) + \sqrt{(\sigma_{B}^{2} - \sigma_{i}^{2})^2 +4 \rho_{i}^2 \sigma_{B}^{2} \sigma_{i}^{2}})^2}{4\rho^2_i\sigma^2_B\sigma^2_i+(-(\sigma_{B}^{2} - \sigma_{i}^{2}) + \sqrt{(\sigma_{B}^{2} - \sigma_{i}^{2})^2 +4 \rho_{i}^2 \sigma_{B}^{2} \sigma_{i}^{2}})^2} \, ,
$$
  since $e^2_{11}+e^2_{21}=1$.
  %for any $t\in \R$.\\
  Analogously, the eigenvector $e^i_2$ is given by
   \begin{equation}\label{app:eigv2}
  \begin{pmatrix}
  e^i_{12} \\
  e^i_{22}
  \end{pmatrix}
  =
  \begin{pmatrix}
  \frac{2\rho_i\sigma_B\sigma_i}{-(\sigma_{B}^{2} - \sigma_{i}^{2}) - \sqrt{(\sigma_{B}^{2} - \sigma_{i}^{2})^2 +4 \rho_{i}^2 \sigma_{B}^{2} \sigma_{i}^{2}}} \\
  1
  \end{pmatrix}
  h \, ,
 \end{equation}
  where
  $$
  h^2=\frac{(-(\sigma_{B}^{2} - \sigma_{i}^{2}) - \sqrt{(\sigma_{B}^{2} - \sigma_{i}^{2})^2 +4 \rho_{i}^2 \sigma_{B}^{2} \sigma_{i}^{2}})^2}{4\rho^2_i\sigma^2_B\sigma^2_i+(-(\sigma_{B}^{2} - \sigma_{i}^{2}) - \sqrt{(\sigma_{B}^{2} - \sigma_{i}^{2})^2 +4 \rho_{i}^2 \sigma_{B}^{2} \sigma_{i}^{2}})^2}.
  $$
  Having said that, we can obtain the following principal component factorization
  $$
  \left(
    \begin{array}{c}
   R_{B} - \mu_B  \\
   R_{i} - \mu_i \\
    \end{array}
  \right)= \left(
    \begin{array}{cc}
   e^i_{11} & e^i_{12} \\
   e^i_{21} & e^i_{22} \\
    \end{array}
  \right)
  \left(
    \begin{array}{cc}
   \lambda^i_{1} & 0 \\
   0 & \lambda^i_{2} \\
    \end{array}
  \right)^{\frac{1}{2}}
  \left(
    \begin{array}{c}
   Z^i_{1}  \\
   Z^i_{2} \\
    \end{array}
  \right)
  $$
where $Z^i_{1}$ and $Z^i_{2}$ are i.i.d. standard normal random variables.
  %
%  namely,
  %
 % $$
  %\left(
   % \begin{array}{c}
  % r_{B} - \mu_B  \\
 %  r_{i} - \mu_i \\
 %   \end{array}
 % \right)= \left(
  %  \begin{array}{c}
  % e_{11} \sqrt{\lambda_{1}^{i}} f_{1}^{i} + e_{12} \sqrt{\lambda_{2}^{i}} f_{2}^{i} \\
  % e_{21} \sqrt{\lambda_{1}^{i}} f_{1}^{i} + e_{22} \sqrt{\lambda_{2}^{i}} f_{2}^{i} \\
  %  \end{array}
 % \right)
  %$$
  %
 Hence, we have
 \begin{eqnarray}
% \nonumber % Remove numbering (before each equation)
  R_{B} &=& \mu_B + e_{11}^{i} \sqrt{\lambda_{1}^{i} } Z_{1}^{i}  + e_{12}^{i}  \sqrt{\lambda_{2}^{i} } Z_{2}^{i}  \nonumber \\
  R_{i} &=& \mu_i + e_{21}^{i} \sqrt{\lambda_{1}^{i}} Z_{1}^{i} + e_{22}^{i} \sqrt{\lambda_{2}^{i}} Z_{2}^{i} \nonumber
\end{eqnarray}

  \begin{remark}
  \noindent Let $e^i_1$ and $e^i_2$ be the eigenvectors of $\Sigma_i$ as in \eqref{app:eigv1} and \eqref{app:eigv2}, respectively.
  Then,
  \begin{eqnarray}
% \nonumber % Remove numbering (before each equation)
 e_{11}^{i} e_{21}^{i}  &=& \frac{2\rho_i\sigma_B\sigma_i}{-\sigma_{B}^{2} + \sigma_{i}^{2} + \sqrt{(\sigma_{B}^{2} - \sigma_{i}^{2})^2 +4 \rho_{i}^2 \sigma_{B}^{2} \sigma_{i}^{2}}}k^2 \nonumber \\
   &=& \frac{\rho_i\sigma_B\sigma_i(\lambda^i_1-\sigma^2_B)}{\rho^2_i\sigma^2_B\sigma^2_i+(\lambda^i_1-\sigma^2_B)^2} \label{app:e_11e_21}
\end{eqnarray}
and
  \begin{eqnarray}
% \nonumber % Remove numbering (before each equation)
 e_{12}^{i} e_{22}^{i}  &=& \frac{2\rho_i\sigma_B\sigma_i}{-\sigma_{B}^{2} + \sigma_{i}^{2} - \sqrt{(\sigma_{B}^{2} - \sigma_{i}^{2})^2 +4 \rho_{i}^2 \sigma_{B}^{2} \sigma_{i}^{2}}}h^2 \nonumber \\
   &=& \frac{\rho_i\sigma_B\sigma_i(\lambda^i_2-\sigma^2_B)}{\rho^2_i\sigma^2_B\sigma^2_i+(\lambda^i_2-\sigma^2_B)^2} \label{app:e_12e_22}
\end{eqnarray}
Furthermore, we note that in \eqref{app:e_11e_21},
\begin{equation*}
\lambda_{1}^{i}- \sigma^2_B=
  -\frac{1}{2}(\sigma_{B}^{2} - \sigma_{i}^{2}) + \frac{1}{2} \sqrt{(\sigma_{B}^{2} - \sigma_{i}^{2})^2 +4 \rho_{i}^2 \sigma_{B}^{2} \sigma_{i}^{2}} \geq 0  \qquad \forall \rho_{i} \in [-1, 1] \, ,
\end{equation*}
since $\sqrt{(\sigma_{B}^{2} - \sigma_{i}^{2})^2 +4 \rho_{i}^2 \sigma_{B}^{2} \sigma_{i}^{2}} \geq (\sigma_{B}^{2} - \sigma_{i}^{2})$;
whereas, in \eqref{app:e_12e_22}
\begin{equation*}
\lambda_{2}^{i}-\sigma^2_B=
  -\frac{1}{2} (\sigma_{B}^{2} - \sigma_{i}^{2}) - \frac{1}{2} \sqrt{(\sigma_{B}^{2} - \sigma_{i}^{2})^2 +4 \rho_{i}^2 \sigma_{B}^{2} \sigma_{i}^{2}} \leq 0 \qquad \forall \rho_{i} \in [-1, 1] \, .
\end{equation*}
%
%This allow us to cmclude that $\lambda^i_1 \geq \lambda^i_2 \geq 0$, $\lambda_{1}^{i} \geq \sigma^2_B$, and $\lambda_{2}^{i} \leq \sigma^2_B$.
  \end{remark}

  \subsection{Skew case}\label{Eigsn}

In case of skew-normal distributed markets, following similar arguments used in the proof of Proposition \ref{prop23}, we have that the covariance matrix $\widetilde{\Sigma}_i$
of the benchmark $B$ and the asset $i$ returns is
\begin{equation}\label{Sigma2}
\widetilde{\Sigma}_i=\biggl( 1-\frac{2\delta^2_B}{\pi}\biggr) \Sigma_i \, ,
\end{equation}
where $\Sigma_i$ is as in \eqref{Sigma1}.
Therefore, the eigenvalues of $\widetilde{\Sigma}_i$ are proportional to those of $\Sigma_i$,
namely
\begin{equation} \label{app:eigenV_Skew}
\widetilde{\lambda}^i_1=\biggl( 1-\frac{2\delta^2_B}{\pi}\biggr) \lambda^i_1, \qquad \widetilde{\lambda}^i_2=\biggl( 1-\frac{2\delta^2_B}{\pi}\biggr) \lambda^i_2,
\end{equation}
where $\lambda^i_1$ and $\lambda^i_2$ are given by \eqref{lam11}, \eqref{lam12}, respectively.
For what concerns the eigenvectors, they are the same of the normal case.

%\subsection{\textbf{Some inequalities about $\lambda^i_{1,2}$}}\label{App_ineq}

\section{Additional data information} \label{app:DataInfo}

%The S\&P 500 constituents, were retrieved from Yahoo Finance, for a total of $T=804$ weekly observations (prices).

The dataset is composed of the individual constituents belonging to the S\&P 500  (as retrieved from Yahoo Finance) and the main index alongside its industry sectors (sub-indices) taken from Bloomberg.

Figure \ref{fig1} shows the prices and log-returns of the S\&P 500 index.
\begin{figure}[!ht]
	\centering
	\begin{subfigure}
	
	\includegraphics[width=1\textwidth]{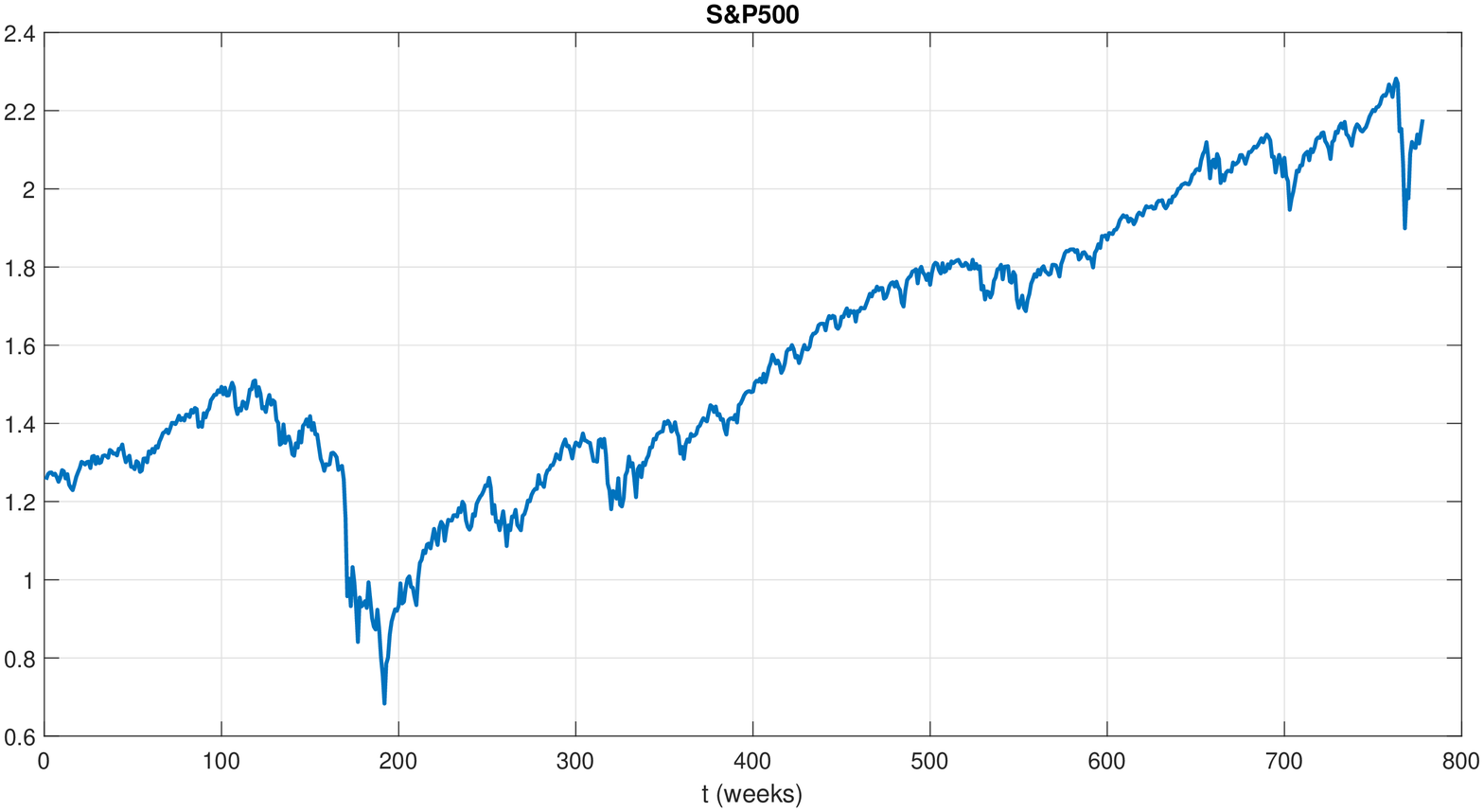}
	\end{subfigure}
	%\label{fig1}
     \begin{subfigure}

	\includegraphics[width=1\textwidth]{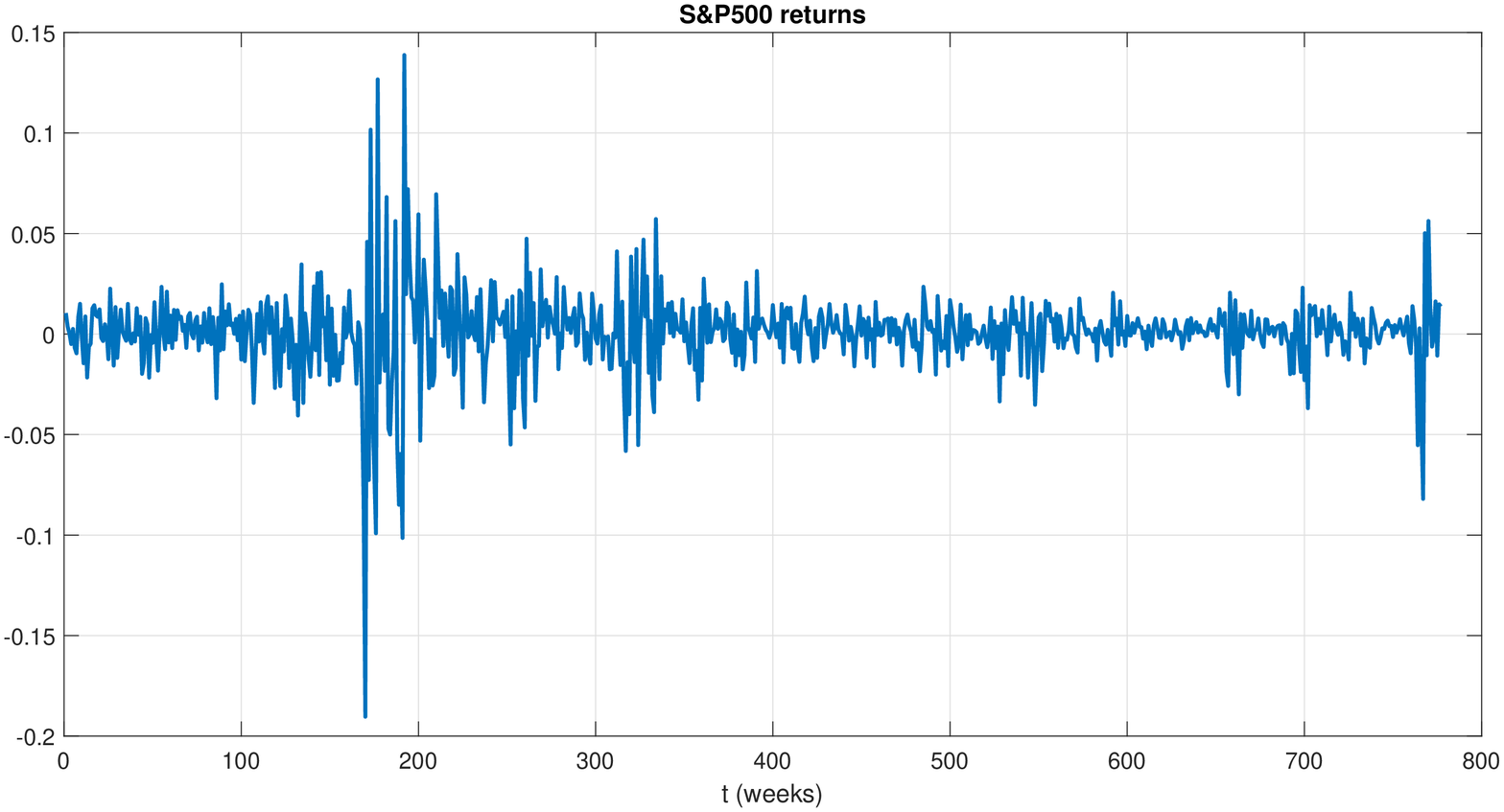}
	\end{subfigure}
	\caption{Weekly prices (top) and log-returns (bottom) of the S\&P 500 from 07 January 2005 to 29 May 2020. $T=804$ weekly observations. Source Bloomberg }
	\label{fig1}

\end{figure}
The said index includes the common stocks issued by 500 large-cap companies traded on USA stock exchanges which cover around 80\% of the equity market by capitalization.
As the index is weighted by free-float market capitalization, larger companies account more and constituents may change over time.
Indeed, between  7.01.2005 and 29.05.2020, we ended up considering 741 common stocks  (see Table \ref{T:Dataset}).
Furthermore, with reference the sub-indices mentioned in Section \ref{Sec:practitioners},  Table \ref{Tab:SP500} reports the $K=10$ sectors that make up the S\&P 500 index.

\begin{table}[!ht]
\centering
\begin{tabular}{@{}lll@{}}
\toprule
\textbf{Type} & \textbf{Bloomberg Ticker}    & \textbf{Bloomberg Name}                       \\ \midrule
Index     & SPX Index  & S\&P 500 INDEX             \\
Sub-Index & S5ENRS Index & S\&P 500 ENERGY INDEX      \\
Sub-Index & S5FINL Index & S\&P 500 FINANCIALS INDEX  \\
Sub-Index & S5INDU Index & S\&P 500 INDUSTRIALS IDX   \\
Sub-Index & S5MATR Index & S\&P 500 MATERIALS INDEX   \\
Sub-Index & S5UTIL Index & S\&P 500 UTILITIES INDEX   \\
Sub-Index & S5CONS Index & S\&P 500 CONS STAPLES IDX  \\
Sub-Index & S5TELS Index & S\&P 500 COMM SVC          \\
Sub-Index & S5COND Index & S\&P 500 CONS DISCRET IDX  \\
Sub-Index & S5HLTH Index & S\&P 500 HEALTH CARE IDX   \\
Sub-Index & S5TECH Index & S\&P 500 TECH HW \& EQP IX \\

\bottomrule
\end{tabular}
\caption{S\&P 500 and list of the sectors (sub-indices) in which the index is composed. Data from 07 January 2005 to 29 May 2020. Source: Bloomberg.}\label{Tab:SP500}
\end{table}

Table \ref{T:Dataset} reports the "signature" of the $i$-th asset for any $i$.
%What we find is that the $n^*$ assets constituting the tracking portfolio remain the same for a long time, and this fact ensures lower turnover.

%Finally, Figure \ref{fig5} shows the optimal weights computed by the hPCA (skew-normal dist.).

\begin{table}[!ht]

  \centering
  \caption{List of securities by code}  \label{T:Dataset}

  \begin{adjustbox}{width=1\textwidth}
\begin{threeparttable}

\begin{tabular}{@{}lllllllllllll@{}}
\toprule
\multicolumn{12}{c}{\textbf{ID and numbering of S\&P500 constituents}}                                                                                       &             \\ \midrule
1 = A      & 2 = AAL    & 3 = AAP    & 4 = AAPL   & 5 = ABBV    & 6 = ABC     & 7 = ABK    & 8 = ABMD   & 9 = ABS    & 10 = ABT   & 11 = ACAS  & 12 = ACN    & 13 = ADBE   \\
15 = ADM   & 16 = ADP   & 17 = ADS   & 18 = ADSK  & 19 = ADT    & 20 = AEE    & 21 = AEP   & 22 = AES   & 23 = AET   & 24 = AFL   & 25 = AGN   & 26 = AIG    & 27 = AIV    \\
29 = AJG   & 30 = AKAM  & 31 = AKS   & 32 = ALB   & 33 = ALGN   & 34 = ALK    & 35 = ALL   & 36 = ALLE  & 37 = ALTR  & 38 = ALXN  & 39 = AMAT  & 40 = AMCR   & 41 = AMD    \\
43 = AMG   & 44 = AMGN  & 45 = AMP   & 46 = AMT   & 47 = AMZN   & 48 = AN     & 49 = ANDV  & 50 = ANET  & 51 = ANF   & 52 = ANR   & 53 = ANSS  & 54 = ANTM   & 55 = AON    \\
57 = APA   & 58 = APC   & 59 = APD   & 60 = APH   & 61 = APOL   & 62 = APTV   & 63 = ARE   & 64 = ARG   & 65 = ATI   & 66 = ATO   & 67 = ATVI  & 68 = AV     & 69 = AVB    \\
71 = AVP   & 72 = AVY   & 73 = AWK   & 74 = AXP   & 75 = AYE    & 76 = AYI    & 77 = AZO   & 78 = BA    & 79 = BAC   & 80 = BAX   & 81 = BBBY  & 82 = BBY    & 83 = BCR    \\
85 = BEAM  & 86 = BEN   & 87 = BF.B  & 88 = BHF   & 89 = BIG    & 90 = BIIB   & 91 = BIO   & 92 = BJS   & 93 = BK    & 94 = BKNG  & 95 = BKR   & 96 = BLK    & 97 = BLL    \\
99 = BMS   & 100 = BMY  & 101 = BR   & 102 = BRCM & 103 = BRK.B & 104 = BS    & 105 = BSX  & 106 = BTU  & 107 = BWA  & 108 = BXLT & 109 = BXP  & 110 = C     & 111 = CA    \\
113 = CAH  & 114 = CAM  & 115 = CARR & 116 = CAT  & 117 = CB    & 118 = CBE   & 119 = CBOE & 120 = CBRE & 121 = CCE  & 122 = CCI  & 123 = CCK  & 124 = CCL   & 125 = CDNS  \\
127 = CE   & 128 = CEG  & 129 = CELG & 130 = CEPH & 131 = CERN  & 132 = CF    & 133 = CFG  & 134 = CFN  & 135 = CHD  & 136 = CHK  & 137 = CHRW & 138 = CHTR  & 139 = CI    \\
141 = CL   & 142 = CLF  & 143 = CLX  & 144 = CMA  & 145 = CMCSA & 146 = CMCSK & 147 = CME  & 148 = CMG  & 149 = CMI  & 150 = CMS  & 151 = CNC  & 152 = CNP   & 153 = CNX   \\
155 = COG  & 156 = COL  & 157 = COO  & 158 = COP  & 159 = COST  & 160 = COTY  & 161 = COV  & 162 = CPB  & 163 = CPGX & 164 = CPRI & 165 = CPRT & 166 = CPWR  & 167 = CRM   \\
169 = CSCO & 170 = CSRA & 171 = CSX  & 172 = CTAS & 173 = CTL   & 174 = CTSH  & 175 = CTVA & 176 = CTXS & 177 = CVC  & 178 = CVH  & 179 = CVS  & 180 = CVX   & 181 = CXO   \\
183 = DAL  & 184 = DD   & 185 = DE   & 186 = DELL & 187 = DF    & 188 = DFS   & 189 = DG   & 190 = DGX  & 191 = DHI  & 192 = DHR  & 193 = DIS  & 194 = DISCA & 195 = DISCK \\
197 = DJ   & 198 = DLPH & 199 = DLR  & 200 = DLTR & 201 = DNB   & 202 = DNR   & 203 = DO   & 204 = DOV  & 205 = DOW  & 206 = DPS  & 207 = DPZ  & 208 = DRE   & 209 = DRI   \\
211 = DTV  & 212 = DUK  & 213 = DV   & 214 = DVA  & 215 = DVN   & 216 = DXC   & 217 = DXCM & 218 = EA   & 219 = EBAY & 220 = ECL  & 221 = ED   & 222 = EFX   & 223 = EIX   \\
225 = EL   & 226 = EMC  & 227 = EMN  & 228 = EMR  & 229 = ENDP  & 230 = EOG   & 231 = EP   & 232 = EQIX & 233 = EQR  & 234 = EQT  & 235 = ES   & 236 = ESRX  & 237 = ESS   \\
239 = ETFC & 240 = ETN  & 241 = ETR  & 242 = EVHC & 243 = EVRG  & 244 = EW    & 245 = EXC  & 246 = EXPD & 247 = EXPE & 248 = EXR  & 249 = F    & 250 = FANG  & 251 = FAST  \\
253 = FBHS & 254 = FCX  & 255 = FDO  & 256 = FDX  & 257 = FE    & 258 = FFIV  & 259 = FHN  & 260 = FII  & 261 = FIS  & 262 = FISV & 263 = FITB & 264 = FL    & 265 = FLIR  \\
267 = FLS  & 268 = FLT  & 269 = FMC  & 270 = FNM  & 271 = FOSL  & 272 = FOX   & 273 = FOXA & 274 = FRC  & 275 = FRE  & 276 = FRT  & 277 = FRX  & 278 = FSLR  & 279 = FTI   \\
281 = FTR  & 282 = FTV  & 283 = GAS  & 284 = GD   & 285 = GDI   & 286 = GE    & 287 = GENZ & 288 = GGP  & 289 = GHC  & 290 = GILD & 291 = GIS  & 292 = GL    & 293 = GLK   \\
295 = GM   & 296 = GMCR & 297 = GME  & 298 = GNW  & 299 = GOOG  & 300 = GOOGL & 301 = GPC  & 302 = GPN  & 303 = GPS  & 304 = GR   & 305 = GRA  & 306 = GRMN  & 307 = GS    \\
309 = GWW  & 310 = HAL  & 311 = HAR  & 312 = HAS  & 313 = HBAN  & 314 = HBI   & 315 = HCA  & 316 = HCBK & 317 = HD   & 318 = HES  & 319 = HFC  & 320 = HIG   & 321 = HII   \\
323 = HNZ  & 324 = HOG  & 325 = HOLX & 326 = HON  & 327 = HOT   & 328 = HP    & 329 = HPE  & 330 = HPQ  & 331 = HRB  & 332 = HRL  & 333 = HRS  & 334 = HSIC  & 335 = HSP   \\
337 = HSY  & 338 = HUM  & 339 = HWM  & 340 = IBM  & 341 = ICE   & 342 = IDXX  & 343 = IEX  & 344 = IFF  & 345 = IGT  & 346 = ILMN & 347 = INCY & 348 = INFO  & 349 = INTC  \\
351 = IP   & 352 = IPG  & 353 = IPGP & 354 = IQV  & 355 = IR    & 356 = IRM   & 357 = ISRG & 358 = IT   & 359 = ITT  & 360 = ITW  & 361 = IVZ  & 362 = J     & 363 = JBHT  \\
365 = JCI  & 366 = JCP  & 367 = JDSU & 368 = JEC  & 369 = JEF   & 370 = JKHY  & 371 = JNJ  & 372 = JNPR & 373 = JNS  & 374 = JNY  & 375 = JOY  & 376 = JPM   & 377 = JWN   \\
379 = KEY  & 380 = KEYS & 381 = KFT  & 382 = KG   & 383 = KHC   & 384 = KIM   & 385 = KLAC & 386 = KMB  & 387 = KMI  & 388 = KMX  & 389 = KO   & 390 = KORS  & 391 = KR    \\
393 = KSE  & 394 = KSS  & 395 = KSU  & 396 = L    & 397 = LB    & 398 = LDOS  & 399 = LEG  & 400 = LEH  & 401 = LEN  & 402 = LH   & 403 = LHX  & 404 = LIFE  & 405 = LIN   \\
407 = LLL  & 408 = LLTC & 409 = LLY  & 410 = LM   & 411 = LMT   & 412 = LNC   & 413 = LNT  & 414 = LO   & 415 = LOW  & 416 = LRCX & 417 = LSI  & 418 = LUK   & 419 = LUV   \\
421 = LVS  & 422 = LW   & 423 = LXK  & 424 = LYB  & 425 = LYV   & 426 = M     & 427 = MA   & 428 = MAA  & 429 = MAC  & 430 = MAR  & 431 = MAS  & 432 = MAT   & 433 = MCD   \\
435 = MCK  & 436 = MCO  & 437 = MDLZ & 438 = MDT  & 439 = MEE   & 440 = MET   & 441 = MFE  & 442 = MGM  & 443 = MHK  & 444 = MHS  & 445 = MI   & 446 = MIL   & 447 = MJN   \\
449 = MKTX & 450 = MLM  & 451 = MMC  & 452 = MMI  & 453 = MMM   & 454 = MNK   & 455 = MNST & 456 = MO   & 457 = MOLX & 458 = MON  & 459 = MOS  & 460 = MPC   & 461 = MRK   \\
463 = MS   & 464 = MSCI & 465 = MSFT & 466 = MSI  & 467 = MTB   & 468 = MTD   & 469 = MU   & 470 = MUR  & 471 = MWW  & 472 = MXIM & 473 = MYL  & 474 = NAVI  & 475 = NBL   \\
477 = NCLH & 478 = NDAQ & 479 = NE   & 480 = NEE  & 481 = NEM   & 482 = NFLX  & 483 = NFX  & 484 = NI   & 485 = NKE  & 486 = NKTR & 487 = NLOK & 488 = NLSN  & 489 = NOC   \\
491 = NOVL & 492 = NOW  & 493 = NRG  & 494 = NSC  & 495 = NSM   & 496 = NTAP  & 497 = NTRS & 498 = NUE  & 499 = NVDA & 500 = NVLS & 501 = NVR  & 502 = NWL   & 503 = NWS   \\
505 = NYT  & 506 = NYX  & 507 = O    & 508 = ODFL & 509 = ODP   & 510 = OI    & 511 = OKE  & 512 = OMC  & 513 = ORCL & 514 = ORLY & 515 = OTIS & 516 = OXY   & 517 = PAYC  \\
519 = PBCT & 520 = PBI  & 521 = PCAR & 522 = PCG  & 523 = PCL   & 524 = PCLN  & 525 = PCP  & 526 = PCS  & 527 = PDCO & 528 = PEAK & 529 = PEG  & 530 = PEP   & 531 = PETM  \\
533 = PFG  & 534 = PG   & 535 = PGN  & 536 = PGR  & 537 = PH    & 538 = PHM   & 539 = PKG  & 540 = PKI  & 541 = PLD  & 542 = PLL  & 543 = PM   & 544 = PNC   & 545 = PNR   \\
547 = POM  & 548 = PPG  & 549 = PPL  & 550 = PRGO & 551 = PRU   & 552 = PSA   & 553 = PSX  & 554 = PTV  & 555 = PVH  & 556 = PWR  & 557 = PXD  & 558 = PYPL  & 559 = Q     \\
561 = QEP  & 562 = QRVO & 563 = QTRN & 564 = R    & 565 = RAD   & 566 = RAI   & 567 = RCL  & 568 = RDC  & 569 = RE   & 570 = REG  & 571 = REGN & 572 = RF    & 573 = RHI   \\
575 = RIG  & 576 = RJF  & 577 = RL   & 578 = RMD  & 579 = ROK   & 580 = ROL   & 581 = ROP  & 582 = ROST & 583 = RRC  & 584 = RRD  & 585 = RSG  & 586 = RSH   & 587 = RTN   \\
589 = RX   & 590 = S    & 591 = SAI  & 592 = SBAC & 593 = SBL   & 594 = SBUX  & 595 = SCG  & 596 = SCHW & 597 = SE   & 598 = SEE  & 599 = SGP  & 600 = SHLD  & 601 = SHW   \\
603 = SIG  & 604 = SII  & 605 = SIVB & 606 = SJM  & 607 = SLB   & 608 = SLE   & 609 = SLG  & 610 = SLM  & 611 = SNA  & 612 = SNDK & 613 = SNI  & 614 = SNPS  & 615 = SO    \\
617 = SPGI & 618 = SPLS & 619 = SRCL & 620 = SRE  & 621 = STE   & 622 = STI   & 623 = STJ  & 624 = STR  & 625 = STT  & 626 = STX  & 627 = STZ  & 628 = SUN   & 629 = SVU   \\
631 = SWKS & 632 = SWN  & 633 = SWY  & 634 = SYF  & 635 = SYK   & 636 = SYY   & 637 = T    & 638 = TAP  & 639 = TDC  & 640 = TDG  & 641 = TDY  & 642 = TE    & 643 = TEG   \\
645 = TER  & 646 = TFC  & 647 = TFX  & 648 = TGNA & 649 = TGT   & 650 = THC   & 651 = TIE  & 652 = TIF  & 653 = TJX  & 654 = TLAB & 655 = TMO  & 656 = TMUS  & 657 = TPR   \\
659 = TRIP & 660 = TROW & 661 = TRV  & 662 = TSCO & 663 = TSG   & 664 = TSN   & 665 = TSS  & 666 = TT   & 667 = TTWO & 668 = TWC  & 669 = TWTR & 670 = TWX   & 671 = TXN   \\
673 = TYC  & 674 = TYL  & 675 = UA   & 676 = UAA  & 677 = UAL   & 678 = UDR   & 679 = UHS  & 680 = ULTA & 681 = UNH  & 682 = UNM  & 683 = UNP  & 684 = UPS   & 685 = URBN  \\
687 = USB  & 688 = V    & 689 = VAR  & 690 = VFC  & 691 = VIAB  & 692 = VIAC  & 693 = VLO  & 694 = VMC  & 695 = VNO  & 696 = VRSK & 697 = VRSN & 698 = VRTX  & 699 = VTR   \\
701 = WAB  & 702 = WAT  & 703 = WB   & 704 = WBA  & 705 = WCG   & 706 = WDC   & 707 = WEC  & 708 = WELL & 709 = WFC  & 710 = WFM  & 711 = WFR  & 712 = WHR   & 713 = WIN   \\
715 = WM   & 716 = WMB  & 717 = WMT  & 718 = WPX  & 719 = WRB   & 720 = WRK   & 721 = WST  & 722 = WU   & 723 = WY   & 724 = WYN  & 725 = WYNN & 726 = X     & 727 = XEC   \\
729 = XL   & 730 = XLNX & 731 = XOM  & 732 = XRAY & 733 = XRX   & 734 = XTO   & 735 = XYL  & 736 = YHOO & 737 = YUM  & 738 = ZBH  & 739 = ZBRA & 740 = ZION  & 741 = ZTS
\\ \bottomrule
\end{tabular}
%
%
%\begin{tablenotes}
%\item Note that a number of constituents entered and left the index therefore the total number of assets we analysed is 741.
%
%\end{tablenotes}

\end{threeparttable}

 \end{adjustbox}

\end{table}

%
%\begin{figure}[!htbp]
%	\centering
%	\includegraphics[width=1.15\textwidth]{fig5.eps}
%	\caption{Optimal weights $w^*_i$ computed by the hPCA procedure, for any $t\in [25,804]$, and for any observed assets. The signature of the $i$-th asset is reported in Table \ref{T:Dataset}.}
%	\label{fig5}
%\end{figure}

%\end{appendices}

\end{document}